\documentclass[preprint,12pt]{elsarticle}
\usepackage{rotating}
\usepackage{amsfonts}
\usepackage{nicefrac}
\usepackage{float}
\usepackage{wrapfig}
\usepackage{amscd}
\usepackage{paralist}
\usepackage{verbatim}
\usepackage{tikz}
\usetikzlibrary{positioning}
\usetikzlibrary{graphs}
\usepackage{subcaption}
\usepackage{adjustbox}
\usepackage{algpseudocode}
\usepackage[ruled,linesnumbered]{algorithm2e}
\usepackage{setspace}
\usepackage{listings}
\usepackage{pgf}
\usepackage{graphicx}
\usepackage{dashbox}
\usepackage{float}
\usepackage{syntax}
\usepackage{hyperref}
\usepackage{hhline, booktabs}
\usepackage{array,multirow}	
\usepackage{tablefootnote}
\usepackage{amsthm}
\newtheorem{theorem}{Theorem}
\usepackage{rotating}
\usepackage{amsthm}
\usepackage{amsfonts}
\usepackage{nicefrac}
\usepackage{wrapfig}
\usepackage{amscd}
\usepackage{verbatim}
\usepackage{tikz}
\usepackage{paralist}
\usepackage{microtype}
\usepackage{makecell}
\usepackage{listings}
\usepackage{caption}
\usepackage{xcolor}
\usepackage{color,soul}
\usepackage{amssymb}

\usepackage{graphicx}
\usepackage{amsmath}
\usepackage{thmtools}

\usepackage{listings}
\definecolor{codegreen}{rgb}{0,0.6,0}
\definecolor{codegray}{rgb}{0.5,0.5,0.5}
\definecolor{codepurple}{rgb}{0.58,0,0.82}
\definecolor{backcolour}{rgb}{0.95,0.95,0.92}

\usepackage{xcolor}

\newcommand{\para}[1]{\smallskip\noindent\textbf{#1.}}

\renewcommand\thmcontinues[1]{Continued}

\newcommand{\slap}{\texttt{SPL}\xspace}

\newcommand{\sloop}{\circledast}

\newcommand{\cfg}[1]{\textsf{cfg}(#1)}
\newcommand{\lt}[1]{\textsf{lt}(#1)}
\newcommand{\opt}[1]{\textsc{Opt}[#1]}
\newcommand{\vars}{\mathbb{V}}
\newcommand{\cost}{\textsc{Cost}}
\newcommand{\DP}{\texttt{dp}}
\newcommand{\compatible}{\leftrightharpoons}

\newcommand{\oseries}[2]{#1 \otimes #2}
\newcommand{\oparal}[2]{#1 \oplus  #2}
\newcommand{\oloop}[1]{#1^\circledast}
\usepackage{amsmath}

\DeclareMathOperator*{\argmin}{arg\,min}

\tikzset{
    arrow/.style={thick,->,>=stealth, draw=blue!20!black},
    vertex/.style={rectangle, minimum width=10mm, minimum height=10mm, fill=green!10, draw=black!50},
    whitevertex/.style={rectangle, minimum width=10mm, minimum height=10mm, fill=white, draw=black!50},
    grayvertex/.style={rectangle, minimum width=10mm, minimum height=10mm, fill=gray!20, draw=gray!80},
    redvertex/.style={rectangle, minimum width=10mm, minimum height=10mm, fill=orange!20, draw=orange!80},
    bluevertex/.style={rectangle, minimum width=10mm, minimum height=10mm, fill=green!30, draw=green!80},
	vertexs/.style={thick, rectangle, minimum size=10mm, fill=blue!20, draw=black},
    vertext/.style={thick, rectangle, minimum size=10mm, fill=red!20, draw=black},
    vertexc/.style={thick, rectangle, minimum size=10mm, fill=green!10, draw=black},
    vertexb/.style={thick, rectangle, minimum size=10mm, fill=black, draw=black, text=white},
    vertexla/.style={minimum size=5mm, draw=none, text=red},
    emptynode/.style={thick, rectangle, minimum size=10mm, fill=white, draw=white, text=white},
    rect/.style={thick, rectangle, minimum size=10mm, fill=black!5, text=black}
}

\usepackage{amssymb}
\usepackage{amsmath}
\journal{Journal of Systems Architecture}

\begin{document}

\begin{frontmatter}

%% Title, authors and addresses

%% use the tnoteref command within \title for footnotes;
%% use the tnotetext command for theassociated footnote;
%% use the fnref command within \author or \affiliation for footnotes;
%% use the fntext command for theassociated footnote;
%% use the corref command within \author for corresponding author footnotes;
%% use the cortext command for theassociated footnote;
%% use the ead command for the email address,
%% and the form \ead[url] for the home page:
%% \title{Title\tnoteref{label1}}
%% \tnotetext[label1]{}
%% \author{Name\corref{cor1}\fnref{label2}}
%% \ead{email address}
%% \ead[url]{home page}
%% \fntext[label2]{}
%% \cortext[cor1]{}
%% \affiliation{organization={},
%%             addressline={},
%%             city={},
%%             postcode={},
%%             state={},
%%             country={}}
%% \fntext[label3]{}

\title{Series-Parallel-Loop Decompositions\\of Control-flow Graphs} %% Article title

%% use optional labels to link authors explicitly to addresses:
%% \author[label1,label2]{}
%% \affiliation[label1]{organization={},
%%             addressline={},
%%             city={},
%%             postcode={},
%%             state={},
%%             country={}}
%%
%% \affiliation[label2]{organization={},
%%             addressline={},
%%             city={},
%%             postcode={},
%%             state={},
%%             country={}}

\author[oxford]{Xuran Cai} %% Author name
\author[oxford]{Amir Kafshdar Goharshady} %% Author name
\author[hkust]{S Hitarth} %% Author name
\author[hkust]{Chun Kit Lam} %% Author name

\affiliation[oxford]{organization={Department of Computer Science, University of Oxford},%Department and Organization
            	addressline={Wolfson Building, Parks Road}, 
            	city={Oxford},
            	country={United Kingdom}}

\affiliation[hkust]{organization={Department of Computer Science and Engineering, Hong Kong University of Science and Technology},%Department and Organization
            addressline={Clear Water Bay}, 
            city={New Territories},
            country={Hong Kong},
          }

%% Abstract
\begin{abstract}
Control-flow graphs (CFGs) of structured programs are well known to exhibit strong sparsity properties. Traditionally, this sparsity has been modeled using graph parameters such as treewidth and pathwidth, enabling the development of faster parameterized algorithms for tasks in compiler optimization, model checking, and program analysis. However, these parameters only approximate the structural constraints of CFGs: although every structured CFG has treewidth at most~7, many graphs with treewidth at most~7 cannot arise as CFGs. As a result, existing parameterized techniques are optimized for a substantially broader class of graphs than those encountered in practice.

In this work, we introduce a new grammar-based decomposition framework that characterizes \emph{exactly} the class of control-flow graphs generated by structured programs. Our decomposition is intuitive, mirrors the syntactic structure of programs, and remains fully compatible with the dynamic-programming paradigm of treewidth-based methods. Using this framework, we design improved algorithms for two classical compiler optimization problems: \emph{Register Allocation} and \emph{Lifetime-Optimal Speculative Partial Redundancy Elimination (LOSPRE)}. Extensive experimental evaluation demonstrates significant performance improvements over previous state-of-the-art approaches, highlighting the benefits of using decompositions tailored specifically to CFGs.
\end{abstract}

%%%Graphical abstract
%\begin{graphicalabstract}
%%\includegraphics{grabs}
%\end{graphicalabstract}
%
%%%Research highlights
%\begin{highlights}
%\item Research highlight 1
%\item Research highlight 2
%\end{highlights}

%% Keywords
\begin{keyword}
static analysis \sep graph decompositions \sep compiler optimization \sep register allocation \sep redundancy elimination
\end{keyword}

\end{frontmatter}

%% If you have bib database file and want bibtex to generate the
%% bibitems, please use
%%
\section{Introduction and Related Works} \label{sec:intro}
Many fundamental tasks in program analysis, compiler optimization, and formal
verification are traditionally approached by reducing them to graph-theoretic
problems. Classical and widely-studied examples include register allocation%
~\cite{DBLP:conf/pldi/Chaitin82,DBLP:conf/lcpc/BouchezDGR06}, LOSPRE%
~\cite{DBLP:conf/scopes/Krause21}, cache optimization techniques such as data
packing~\cite{thabit1982cache} and cache-conscious data placement%
~\cite{DBLP:conf/asplos/CalderKJA98,DBLP:conf/iwmm/BegB10}, interprocedural
data-flow analysis~\cite{DBLP:conf/popl/RepsHS95}, algebraic program
analysis~\cite{DBLP:conf/cav/KincaidRC21}, equality saturation and extraction%
~\cite{egg-equality-saturation,treewidth-equality-extraction}, and
$\mu$-calculus model checking~\cite{DBLP:conf/cav/Obdrzalek03}, among many
others.

The graph problems arising from these tasks are frequently NP-hard%
~\cite{DBLP:conf/pldi/Chaitin82,thabit1982cache,DBLP:conf/popl/Lavaee16}, and in
some cases even hard to approximate unless $\mathrm{P}=\mathrm{NP}$%
~\cite{DBLP:conf/popl/PetrankR02,DBLP:journals/njc/PetrankR05}. This naturally
raises the question of whether solving the most general form of these graph
problems---that is, considering \emph{all} graphs as possible input instances---is
an appropriate model for the original program analysis or compiler optimization
task. In practice, the graphs underlying these tasks are almost always the
control-flow graphs (CFGs) of well-structured programs, or graphs closely
derived from them (e.g., by duplicating vertices). Such CFGs are highly sparse,
and many graphs cannot arise as the CFG of any real program. Hence, algorithms
designed for unrestricted graph classes may be solving a strictly harder problem
than necessary.

A substantial body of work has therefore focused on formalizing the sparsity
and structural properties of CFGs and related graph representations. For
example, program dependence graphs (PDGs)~\cite{DBLP:journals/toplas/FerranteOW87}
capture both data and control dependencies, enabling more precise optimization
analyses. More recently, Generalized Points-to Graphs (GPGs)%
~\cite{DBLP:journals/toplas/GharatKM20} have been proposed for pointer analysis,
retaining only those parts of the CFG relevant to aliasing. A particularly
influential result in understanding CFG sparsity is due to Thorup%
~\cite{THORUP1998159}, who proved that CFGs of structured (i.e., goto-free)
programs in many widely-used languages have treewidth at most~$6$. Treewidth%
~\cite{DBLP:journals/jct/RobertsonS84} is a classical measure of how
tree-like a graph is: a graph of treewidth $k$ can be decomposed into small
overlapping vertex sets (``bags'') of size at most $k+1$, arranged in a
tree-shaped structure. Figure~\ref{fig:tw} illustrates such a decomposition,
and further background can be found in%
~\cite{DBLP:conf/sofsem/Bodlaender05,cygan2015parameterized}. This structure
enables efficient bottom-up and top-down dynamic programming techniques, as well
as divide-and-conquer strategies%
~\cite{DBLP:journals/jal/RobertsonS86,DBLP:conf/icalp/Bodlaender88,DBLP:journals/dam/GoharshadyHM16,DBLP:journals/ress/GoharshadyM20,DBLP:conf/blockchain2/MeybodiGHS22,DBLP:conf/fsttcs/AhmadiCGMSZ22,DBLP:conf/blockchain2/BarakbayevaFGGN24,DBLP:journals/jco/MeybodiGHS25}.

\begin{figure}
	\centering
		$\begin{matrix}
		\hspace{-.7cm}
		\begin{tikzpicture}[node distance={15mm}, line width=1pt, main/.style = {draw, circle}] 
			\node[main] (1) at (0,0){$1$}; 
			\node[main] (2) [below of=1]{$2$}; 
			\node[main] (3) [below left of=2]{$3$}; 
			\node[main] (4) [below right of=2]{$4$}; 
			\node[main] (5) [below left of=3]{$5$}; 
			\node[main] (6) [below right of=4]{$6$}; 
			\draw [] (1) -- (2); 
			\draw [] (2) -- (3); 
			\draw [] (2) -- (4); 
			\draw [] (3) -- (4); 
			\draw [] (3) -- (5); 
			\draw [] (3) -- (6); 
			\draw [] (4) -- (6); 
			
			\node[text width=0cm, red] at (2,0){$ $};
			\draw[rotate=0,line width=1pt,dashed,cyan!60!blue] (0,-0.7) ellipse (20pt and 37pt);
			
			\node[text width=0cm, red] at (4.8,-1.8){$ $};
			\draw[rotate=0,line width=1pt,dashed,red] (0,-2.2) ellipse (50pt and 35pt);
			
			\node[text width=0cm, red] at (5.8,-3.6){$ $};
			\draw[rotate=-15,line width=1pt,dashed,green!60!black] (1.3,-2.8) ellipse (65pt and 30pt);
			
			\node[text width=0cm, red] at (-0.1,-3.6){$ $};
			\draw[rotate=-45,line width=1pt,dashed,gray] (1,-3.3) ellipse (20pt and 37pt);
		\end{tikzpicture}
		&   \hspace{-3.2cm}
		\begin{tikzpicture}[node distance={20mm}, line width=1pt, main/.style = {draw, rectangle}] 
			\node[main] (1) [red]{$\{2, 3, 4\}$}; 
			\node[main] (2) [cyan!60!blue, below right of=1]{$\{1, 2\}$}; 
			\node[main] (3) [below of=1,node distance=21mm, gray]{$\{3, 5\}$}; 
			\node[main] (4) [green!60!black, below left of=1]{$\{3, 4, 6\}$}; 
			\draw [] (1) -- (2); 
			\draw [] (1) -- (3);
			\draw [] (1) -- (4);
		\end{tikzpicture}
	\end{matrix}$
	\caption{A graph $G$ (left) and a tree decomposition of $G$ (right). This example is taken from~\cite{DBLP:journals/pacmpl/ConradoGL23}.}
	\label{fig:tw}
\end{figure}
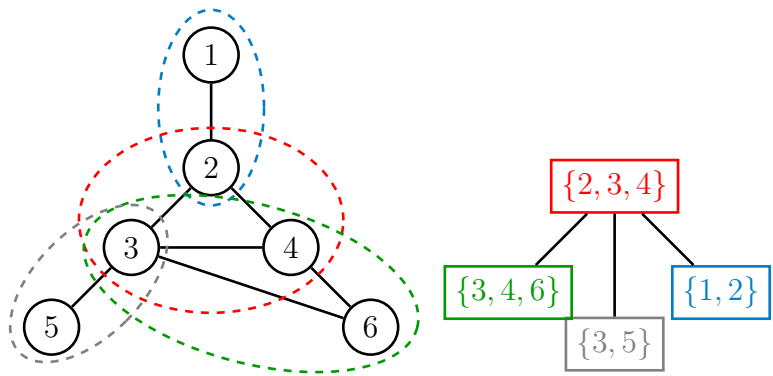

The result of~\cite{THORUP1998159} showed that the treewidth of CFGs is at most $6$ in languages such as \texttt{C} and \texttt{Pascal} and was followed by a number of similar results extending it to other languages, such as \texttt{Java}~\cite{DBLP:conf/alenex/GustedtMT02}, \texttt{Ada}~\cite{DBLP:conf/adaEurope/BurgstallerBS04} and \texttt{Solidity}~\cite{DBLP:conf/sac/ChatterjeeGG19}, both theoretically and experimentally. The work~\cite{DBLP:journals/dam/KrauseLS20} showed that the treewidth can be $7$ in some variants of \texttt{C}.

These bounded-treewidth results effectively opened up a completely new direction of research. When a program-related task, such as those in compiler optimization or verification, is reduced to a graph problem, we often need not solve it over general graphs, but only graphs of bounded treewidth, since they already cover the set of all possible CFGs. Thus, we are no longer looking for polynomial-time algorithms with a runtime of $O(n^c)$ where $n$ is the size of the input, but instead aim to find so-called \emph{piecewise-polynomial (XP)} or \emph{fixed-parameter tractable (FPT)} algorithms~\cite{downey2012parameterized} with  runtimes of the form $O(n^{f(k)})$ and $O(n^c \cdot f(k)),$ respectively, where the parameter $k$ is the treewidth and $f$ is an arbitrary computable function, possibly not a polynomial. Intuitively, since we know that the treewidth is small and at most $7$ in all the instances that we care about, i.e. CFGs, an algorithm with a runtime of $O(n^c \cdot f(k))$ is effectively polynomial-time for all real-world intents and purposes. For example, an algorithm that takes $O(n \cdot 2^k)$ will behave just like a linear-time algorithm over CFGs since $k \leq 7$ and thus $2^k$ is a constant.

This approach has been extremely successful in finding polynomial-time (FPT) algorithms for NP-hard graph problems arising in program-related tasks, making their real-world instances tractable. Some prominent examples are as follows:
\begin{itemize}
	\item After proving the treewidth bound,~\cite{THORUP1998159} showed that Chaitin-like register allocation, i.e.~assigning program variables to a limited number of registers so that no two variables with intersecting lifetimes (without allowing lifetime splitting) get assigned to the same register, can be approximated over graphs of bounded treewidth $k$ within a factor of $\lfloor k/2 + 1 \rfloor$ from optimal in linear time. This leads to a $4$-approximation for CFGs. In contrast, it is well-known that the problem is equivalent to graph coloring~\cite{DBLP:conf/pldi/Chaitin82} and thus hard-to-approximate within any constant factor over general graphs unless P=NP. 
	\item This was later improved to an optimal linear-time FPT algorithm for the decision variant of register allocation, i.e.~answering whether it is possible to have no spilling with a fixed number $r$ of registers~\cite{DBLP:conf/soda/BodlaenderGT98}.
	\item It was later shown that the problem of optimal-cost register allocation, i.e.~minimizing the total cost of spills given a fixed number of registers, is also solvable in polynomial time (XP) when the treewidth is bounded~\cite{DBLP:conf/cc/Krause13}. 
	\item In~\cite{DBLP:journals/iandc/Courcelle90,DBLP:journals/algorithmica/BoriePT92}, it was established that any formula in the monadic second-order logic of graphs can be model-checked in linear time over graphs of bounded treewidth. This was extended in~\cite{DBLP:conf/lpar/ChimesIZ24} to embeddable bounded treewidth a more general family of graph grammars.
	\item \cite{DBLP:journals/pacmpl/ChatterjeeGOP19,DBLP:conf/pldi/AhmadiDGP22} showed that both data packing and cache-conscious data-placement, which are two classical cache management problems in compiler optimization, are solvable/approximable in polynomial time over instances with bounded treewidth, despite their hardness over general instances~\cite{DBLP:conf/popl/PetrankR02,DBLP:conf/popl/Lavaee16}.
	\item The work~\cite{DBLP:conf/cav/Obdrzalek03} showed that the winner in a parity game whose underlying graph has bounded treewidth can be decided in polynomial time. The well-known correspondence between parity games and $\mu$-calculus/LTL model checking makes this applicable to real-world instances obtained from CFGs.
\end{itemize}

The bounded treewidth result has been useful even when the original graph problem already admitted a polynomial-time solution on all graphs. In many problems, one can obtain a lower-degree polynomial algorithm by exploiting the treewidth. For example:
\begin{itemize}
	\item The works~\cite{DBLP:conf/popl/ChatterjeeGIP16,DBLP:conf/esop/ChatterjeeGIP20,DBLP:conf/atva/ChatterjeeGP17,DBLP:journals/toplas/ChatterjeeGGIP19,DBLP:conf/esop/ChatterjeeGIP20} consider interprocedural data-flow analyses, i.e.~problems such as null-pointer identification and available expressions, and provide linear-time FPT algorithms for their on-demand variant. This is in contrast to the non-parameterized methods' quadratic runtime~\cite{DBLP:conf/popl/RepsHS95,DBLP:conf/sigsoft/HorwitzRS95}. A similar result is obtained for algebraic program analysis in~\cite{DBLP:journals/pacmpl/ConradoGKTZ23}.
	\item Various problems in linear algebra~\cite{fomin2018fully}, including Gaussian elimination and many classical qualitative and quantitative tasks on Markov chains and Markov decision processes~\cite{DBLP:conf/cav/ChatterjeeL13,DBLP:conf/atva/AsadiCGMP20} can be solved in linear-time FPT if the underlying graph has bounded treewidth. This is in contrast to the best-known algorithms for general graphs which take $O(n^\omega)$ where $\omega$ is the matrix multiplication constant. A similar improvement was recently obtained for linear programming~\cite{DBLP:conf/stoc/DongLY21}.
\end{itemize}

All the advances above in program-related graph problems are based on the same fundamental intuition: CFGs are sparse and solving problems over CFGs is very different and often much easier than solving the same problems over general graphs. Thus, taking this intuition to its natural conclusion, one should wonder if bounded treewidth sufficiently captures the sparsity of control-flow graphs.

On the one hand, it is easy to come up with graphs of bounded treewidth that are not realizable as a CFG. For example, consider a graph with $n$ connected components, each of which is a complete graph on $6$ vertices, i.e. $K_6.$ Such a graph will have a treewidth of $5$ but is clearly not the CFG of any structured program, given that any node in a CFG can have an out-degree of at most two. On the other hand, the recent work~\cite{DBLP:journals/pacmpl/ConradoGL23} shows that CFGs not only have bounded treewidth but also often have small pathwidth, i.e.~they can be decomposed into small bags that are connected to each other not just in a tree-like manner but in a path-like manner. It then shows that this enables much more efficient dynamic programming algorithms and significant runtime gains for problems such as register allocation. Intuitively, this is because dynamic programming on paths is often simpler than on trees. However, the fundamental question remains: Is bounded pathwidth exactly capturing the sparsity of control-flow graphs? The answer is negative with the same counterexample as above. Thus, we consider this problem: Can we come up with a standard way of decomposing graphs that captures exactly the set of CFGs? Is there a decomposition method that (i)~enables fast dynamic programming for problems in compiler optimization and formal verification, and (ii)~is applicable to precisely the same set of graphs as CFGs?

In this work, we present such a decomposition based on a graph grammar that closely mimics the grammars used for defining the syntax of structured programming languages. Our approach is similar both to the standard definitions of series-parallel graphs~\cite{takamizawa1982linear} and previous program analysis methods based on graph grammars~\cite{wirth1974composition,kennedy1977applications}. However, our graphs are more general in the sense that they cover precisely the set of all control-flow graphs, including CFGs of structured programs that contain \texttt{break} and \texttt{continue} statements. We define a natural decomposition of CFGs based on our graph grammar and show that it can be used for efficient dynamic programming over the CFGs. As two concrete use-cases, we consider the Chaitin-like Register Allocation  problem and Lifetime-Optimal Speculative Partial Redundancy Elimination (LOSPRE), which are both classical and well-studied in the literature with both parameterized and non-parameterized solutions. We provide algorithms based on our decomposition that are faster than not only the previous non-parameterized solutions, but even the state-of-the-art approaches that exploited bounded treewidth.

Finally, we present comprehensive experimental results in Section~\ref{sec:exp}, demonstrating that our approach achieves significant runtime improvements in practice. Notably, for spill-free register allocation, our method is the first exact algorithm capable of handling real-world instances with up to 20 registers, representing a substantial advancement over previous methods that were limited to just 8 registers. This is particularly important as standard architectures, such as the \texttt{x86} family, typically feature 16 registers. Consequently, our algorithm is the first exact solution applicable to these architectures, eliminating the need for approximate or heuristic methods for spill-free register allocation. For LOSPRE, our approach shows an average improvement of approximately five times compared to the previous state-of-the-art treewidth-based method. Given that this earlier solution was already highly optimized, this represents a significant enhancement.

While runtime improvements for Chaitin-like register allocation and LOSPRE are significant in their own right, we believe that similar improvements can be obtained for a much wider family of program-related analyses by relying on our notion of series-parallel-loop decomposition instead of treewidth/pathwidth.

In summary, this work takes the idea of exploiting the sparsity of control-flow graphs for faster graph algorithms to its ultimate end and defines a decomposition notion that, unlike parameters such as treewidth, captures \emph{precisely} the set of graphs that can arise as CFGs of structured programs. It then provides faster algorithms using this new notion of decomposition for two classical problems in compiler optimization, i.e.~the minimum-cost register allocation problem and LOSPRE. We expect that this kind of decomposition would also be useful for the many other problems in which parameterizations by treewidth/pathwidth were applied in the past. 
\section{Background}

In this section, we review the foundational concepts that underpin the algorithmic framework developed later in this work. We begin with the theory of parameterized
algorithms, which offers tools for exploiting structural properties of hard problems, followed by a discussion of tree decompositions and treewidth, two of the most
influential parameters used to capture graph sparsity. These notions play a crucial role in many compiler-optimization tasks, particularly those involving control-flow graphs (CFGs). 

\subsection{Parameterized Algorithms}

Parameterized algorithms~\cite{cygan2015parameterized} provide a refined framework for analyzing computational problems by isolating one or more parameters that capture structural aspects of the input. Instead of measuring complexity solely in terms of input size \(n\), parameterized complexity introduces an auxiliary parameter \(k\), allowing the development of algorithms that remain efficient even for NP-hard problems when \(k\) is relatively small. This paradigm has led to breakthroughs across theoretical computer science, affecting areas such as graph algorithms, formal verification, and compiler optimization~\cite{gpar1, gpar2}.

A central notion in the field is \emph{Fixed-Parameter Tractability} (FPT)~\cite{downey2013fundamentals}. A problem is fixed-parameter tractable if it can be solved in time
\[
O(f(k)\cdot n^{c}),
\]
where \(f\) is a computable function depending solely on the parameter \(k\), and \(c\) is a constant independent of both \(n\) and \(k\).  
While \(f(k)\) is often exponential, the key advantage is that the combinatorial explosion is confined to a small part of the input, making the algorithm efficient for small values of \(k\).

A canonical example is the \emph{Vertex Cover} problem~\cite{khuller2002algorithms}. Given a graph \(G=(V,E)\), the objective is to find a minimum-size subset of vertices that contains at least one endpoint of every edge. Although the general problem is NP-hard, the \emph{parameterized} question,
\[
\text{Is there a vertex cover of size at most } k?
\]
admits an elegant FPT algorithm. The branching strategy shown below demonstrates this idea: each step chooses an uncovered edge \((u,v)\) and branches on including either \(u\) or \(v\) in the solution.

\lstset{language=python}
\begin{lstlisting}

def vertex_cover(graph, k):
    if k < 0:
        return None
    if graph.isEmpty():
        return set()

    u, v = graph.edges[0]
    cover_with_u = vertex_cover(graph.remove_vertex(u), k - 1)
    cover_with_v = vertex_cover(graph.remove_vertex(v), k - 1)

    if cover_with_u is not None:
        return cover_with_u.union({u})
    if cover_with_v is not None:
        return cover_with_v.union({v})
    return None
\end{lstlisting}

This recursive process produces at most \(2^k\) branches, and each branch performs polynomial-time work, yielding a running time of \(O(2^k\cdot n)\). Thus, \(f(k)=2^k\), placing the problem  within FPT.

Beyond FPT, parameterized complexity also studies the broader class XP~\cite{downey2013fundamentals}. A problem is in XP if it has an algorithm that runs in time
\[
O(n^{f(k)}),
\]
meaning that for each fixed parameter value \(k\), the problem becomes polynomial-time solvable—though the degree of the polynomial may depend on \(k\). XP algorithms are particularly useful for moderate parameter values and serve as a bridge between classical polynomial-time algorithms and strictly fixed-parameter tractable ones.

As an example, consider the naive enumeration algorithm for the size-\(k\) Vertex Cover problem. One can simply enumerate all \(\binom{n}{k}\) subsets, checking each candidate in linear time. This approach runs in
\[
O(n^{k+1}),
\]
which fits the XP definition with \(f(k)=k+1\). Although this strategy is far less efficient than the FPT branching algorithm, it provides a correct and conceptually simple XP solution.

Parameterized reasoning has also become an important tool in practical domains such as compiler optimization and model checking. Many analyses over control-flow graphs benefit from small structural parameters (e.g., treewidth or pathwidth), enabling algorithms that operate in linear time when these parameters are bounded. This advantage has motivated a growing body of research applying parameterized techniques to optimize real-world programming languages and verification pipelines.

\subsection{Tree Decompositions and Treewidth}

Tree decompositions, introduced by Robertson and Seymour~\cite{robertson1986graph}, offer a powerful way to understand how ``tree-like'' a graph is. A tree decomposition of a graph \(G=(V,E)\) consists of a tree \(T\) along with a family of vertex subsets (bags) \(\{B_t\}_{t\in T}\). An example is shown in Figure~\ref{fig:tw}. This example originally appeared in~\cite{DBLP:journals/pacmpl/ConradoGL23}. The decomposition must satisfy the following conditions:

\begin{itemize}
    \item \textbf{Covering Condition:} Every vertex \(v\in V\) appears in at least one bag \(B_t\).
    \item \textbf{Edge Condition:} For every edge \((u,v)\in E\), there exists a bag that contains both \(u\) and \(v\).
    \item \textbf{Connectivity Condition:} For each vertex \(v\in V\), all nodes \(t\) such that \(v\in B_t\) must form a connected subtree of \(T\).
\end{itemize}

The width of a decomposition is the size of its largest bag minus one, and the \emph{treewidth} of \(G\) is the minimum width over all possible decompositions. Graphs with small treewidth behave much like trees, enabling a variety of dynamic programming techniques that would be infeasible on arbitrary graphs.

Tree decompositions have proved especially effective for solving special cases of NP-hard problems. Once the graph is decomposed into small, overlapping pieces, one can perform dynamic programming over the tree structure. Because each bag is bounded in size, the amount of computation per node is controlled, and the algorithm’s overall runtime often becomes linear in \(|V|\), with an exponential dependence only on the treewidth.

A well-known application appears in compiler analysis. Recent theoretical results demonstrate that every goto-free structured control-flow graph (CFG) has treewidth at most 7~\cite{THORUP1998159}. This allows treewidth to be treated as a fixed constant for many program analyses. Consequently, numerous optimization problems on CFGs—such as register allocation, data-flow analysis, and code motion—can be solved efficiently by exploiting the bounded-treewidth structure~\cite{gtw1, gtw2}.

Dynamic programming on a tree decomposition proceeds as follows: Let \(P\) be a problem defined on a graph \(G\), and let \((T,\{B_t\}_{t\in T})\) be a tree decomposition. For each node \(t\in T\), let \(G_t\) denote the induced subgraph on the bag \(B_t\). Suppose we construct a table \(S_t\) at each node satisfying:

\begin{enumerate}
    \item \(S_t\) contains sufficient information to evaluate problem \(P\) on \(G_t\).
    \item For leaf nodes, \(S_t\) is computed using only \(G_t\).
    \item For internal nodes, \(S_t\) is computed by combining the tables of its children along with the structure of \(G_t\).
\end{enumerate}

When these conditions hold, dynamic programming can be carried out bottom-up over the tree. Because each bag has size at most \(w+1\), where \(w\) is the treewidth, the computation per bag is bounded by a function of \(w\) alone. As a result, the overall runtime is \(O(f(w)\cdot n)\), which is linear in the size of the program when \(w\) is a small constant. This principle underlies many modern parameterized algorithms for compiler optimization and static analysis.

\section{SPL Decomposition} \label{sec:grammar}

To define structured programs, we follow a syntax similar to that of~\cite{THORUP1998159}. A program is generated from the following grammar:
\begin{equation} \label{gram:prog}
	\begin{array}{rl}
P := & \epsilon \mid \texttt{break} \mid \texttt{continue} \mid P ; P\\ & \mid \texttt{if} ~\varphi~ \texttt{then} ~P~ \texttt{else} ~P~ \texttt{fi} \mid \texttt{while} ~\varphi~ \texttt{do} ~P~ \texttt{od}.
\end{array}
\end{equation}
Here, $\epsilon$ is a neutral statement that does not affect control-flow, e.g.~a variable assignment, and $\varphi$ is a boolean expression. We say a program $P$ is \emph{closed} if every \texttt{break} and \texttt{continue} statement appears within the body of a \texttt{while} loop. The semantics of a program $P$ will be defined in the usual manner. In this section, we are only concerned with the control-flow graph of a program $P.$ We also note that other common constructs such as \texttt{for} loops and \texttt{switch} statements can be defined as syntactic sugar~\cite{THORUP1998159}. Specifically, a \texttt{switch} statement with $k$ jumps can be modeled as $k$ \texttt{if} statements.

Inspired by series-parallel graphs~\cite{DBLP:conf/stoc/ValdesTL79} and the approach of~\cite{kennedy1977applications}, we define a graph grammar that builds up new graphs by combining smaller graphs generated in the same grammar. To capture all control-flow graphs, we consider three composition operations: series, parallel and loop. Thus, we call the graphs generated by this grammar $\slap$ graphs. Also, unlike series-parallel graphs which have two distinguished terminals, i.e.~start and end, our graphs have four distinguished vertices, namely start ($S$), terminate ($T$), break ($B$) and continue ($C$).  
As base cases, the three graphs of Figure~\ref{fig:atom} are $\slap$ and called \emph{atomic} graphs $A_\epsilon, A_{\texttt{break}}$ and $A_{\texttt{continue}}$.

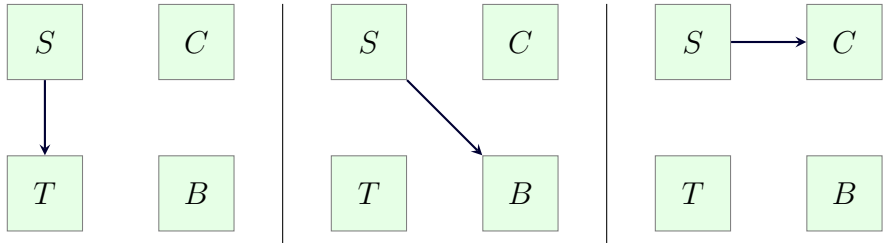
\begin{figure}[H]
	\begin{center}
	\begin{tabular}{cc|ccc|cc}
		\begin{tikzpicture}[scale=0.6]
			\node[vertex] (S) {$S$};
			\node[vertex] (T)  [below=of S] {$T$};
			\node[vertex] (C)  [right=of S] {$C$};
			\node[vertex] (B)  [below=of C] {$B$};
			\draw[arrow] (S) -- (T) ;
		\end{tikzpicture} & \quad & \quad &
		\begin{tikzpicture}[scale=0.6]
			\node[vertex] (S) {$S$};
			\node[vertex] (T)  [below=of S] {$T$};
			\node[vertex] (C)  [right=of S] {$C$};
			\node[vertex] (B)  [below=of C] {$B$};
			\draw[arrow] (S) -- (B) node [midway, above] {};
		\end{tikzpicture} & \quad & \quad &
		\begin{tikzpicture}[scale=0.6]
			\node[vertex] (S) {$S$};
			\node[vertex] (T)  [below=of S] {$T$};
			\node[vertex] (C)  [right=of S] {$C$};
			\node[vertex] (B)  [below=of C] {$B$};
			\draw[arrow] (S) -- (C) node [midway, above] {};
		\end{tikzpicture} 
	\end{tabular}
\end{center}
\caption{Atomic $\slap$ graphs: $A_\epsilon$ (left), $A_{\texttt{break}}$ (middle), and $A_{\texttt{continue}}$ (right).}
\label{fig:atom}
\end{figure}

\slap graphs are generated by the repeated application of series, parallel and loop operations that are denoted by $\oseries{}{}, \oparal{}{},$ and $\sloop$ respectively. Formally, the set of $\slap$ graphs is defined by the following context-free grammar: 
\begin{equation} \label{gram:graph}
G = A_\epsilon \mid A_{\texttt{break}} \mid A_{\texttt{continue}} \mid \oseries{G}{G} \mid \oparal{G}{G} \mid \oloop{G}.
\end{equation}
We denote an $\slap$ graph by the tuple $G = (V, E, S, T, B, C),$ where $V$ is the finite set of vertices of the graph, $E$ is the finite set of its edges, and $S, T, B, C \in V$ are the distinguished start, terminate, break and continue vertices. We sometimes drop $V$ and $E$ when they are clear from the context.

We now define our three operations. Let $G_1 = (V_1, E_1, S_1, T_1,$ $B_1, C_1)$ and $G_2 = (V_2, E_2, S_2, T_2, B_2, C_2)$ be two disjoint \slap graphs.

\begin{enumerate}
    \item \emph{Series Operation.}
    The graph $\oseries{G_1}{G_2}$ is constructed by first taking the disjoint union of
    $G_1$ and $G_2$ and then {merging specific pairs of vertices}.
    In particular, we merge and identify
    \[
        T_1 \equiv S_2,\qquad B_1 \equiv B_2,\qquad C_1 \equiv C_2,
    \]
    producing three merged vertices
    \[
        M = (T_1,S_2), \quad B=(B_1,B_2), \quad C=(C_1,C_2).
    \]
    {Note that no new edges are introduced and no existing edges are removed.}
    Instead, every edge of $G_1$ or $G_2$ that was incident to one of the
    identified vertices is now incident to the corresponding merged vertex.
    The distinguished vertices of the resulting graph are $(S_1, T_2, B, C)$.
    The series operation is associative, and two examples are shown in
    Figure~\ref{fig:series}.
    
    	\begin{figure}[H]
    	\begin{center}
    		\resizebox{\linewidth}{!}{
    			\begin{tabular}{cc|ccc|cc}
    				\begin{tikzpicture}[scale=0.6]
    					\node[vertex] (S1) {$S_1$};
    					\node[vertex,fill=blue!20] (T1)  [below=of S1] {$T_1$};
    					\node[vertex,fill=red!20] (C1)  [right=of S1] {$C_1$};
    					\node[vertex,fill=cyan!20] (B1)  [below=of C1] {$B_1$};
    					\draw[arrow] (S1) -- (T1) node [midway, left] {};
    					\draw[arrow] (S1) -- (B1) node [midway, above] {};
    					\node[emptynode] (R1) [below=of B1] {};
    					
    					\node (z) [right=of B1] {\fontsize{52}{58}\sffamily\bfseries$\oseries{}{}$};
    					
    					\node[vertex,fill=blue!20] (S2) [right=of z] {$S_2$};
    					\node[vertex] (T2)  [below=of S2] {$T_2$};
    					\node[vertex,fill=red!20] (C2)  [right=of S2] {$C_2$};
    					\node[vertex,fill=cyan!20] (B2)  [below=of C2] {$B_2$};
    					\draw[arrow] (S2) -- (T2) ;
    					
    					\node (e) [right=of C2] {\fontsize{52}{58}\sffamily\bfseries$=$};
    					
    					\node[vertex,fill=blue!20,dashed] (M) [right=of e]{$M$};
    					\node[vertex] (S) [above=of M] {$S_1$};
    					\node[vertex] (T)  [below=of M] {$T_2$};
    					\node[vertex,fill=red!20,dashed] (C)  [right=of S] {$C$};
    					\node[vertex,fill=cyan!20,dashed] (B)  [below=of C] {$B$};
    					\draw[arrow] (S) -- (M) ;
    					\draw[arrow] (S) -- (B) ;
    					\draw[arrow] (M) -- (T) ;
    				\end{tikzpicture} \\ \\ \hline \hline \\
    				\begin{tikzpicture}[scale=0.6]
    					\node[vertex] (S1) {$S_1$};
    					\node[vertex,fill=blue!20] (T1)  [below=of S1] {$T_1$};
    					\node[vertex,fill=red!20] (C1)  [right=of S1] {$C_1$};
    					\node[vertex,fill=cyan!20] (B1)  [below=of C1] {$B_1$};
    					\draw[arrow] (S1) -- (T1) ;
    					\node[emptynode] (R1) [below=of B1] {};
    					
    					\node (z) [right=of B1] {\fontsize{52}{58}\sffamily\bfseries$\oseries{}{}$};
    					
    					\node[vertex,fill=blue!20] (S2) [right=of z] {$S_2$};
    					\node[vertex] (T2)  [below=of S2] {$T_2$};
    					\node[vertex,fill=red!20] (C2)  [right=of S2] {$C_2$};
    					\node[vertex,fill=cyan!20] (B2)  [below=of C2] {$B_2$};
    					\draw[arrow] (S2) -- (T2) node [midway, left] {};
    					\draw[arrow] (S2) -- (C2) node [midway, above] {};
    					
    					\node (e) [right=of C2] {\fontsize{52}{58}\sffamily\bfseries$=$};
    					
    					\node[vertex,fill=blue!20,dashed] (M)  [right=of e] {$M$};
    					\node[vertex] (S) [above=of M]{$S_1$};
    					\node[vertex] (T)  [below=of M] {$T_2$};
    					\node[vertex,fill=red!20,dashed] (C)  [right=of M] {$C$};
    					\node[vertex,fill=cyan!20,dashed] (B)  [below=of C] {$B$};
    					\draw[arrow] (S) -- (M);
    					\draw[arrow] (M) -- (T);
    					\draw[arrow] (M) -- (C);
    				\end{tikzpicture}
    				
    			\end{tabular}
    		}
    	\end{center}
    	\caption{Two examples of the series operation $\oseries{}{}$. In each case, the dashed vertices are new vertices obtained by merging the vertices of the same color in the original graphs. For example, $T_1$ and $S_2$ merge to form $M.$}
    	\label{fig:series}
    \end{figure}
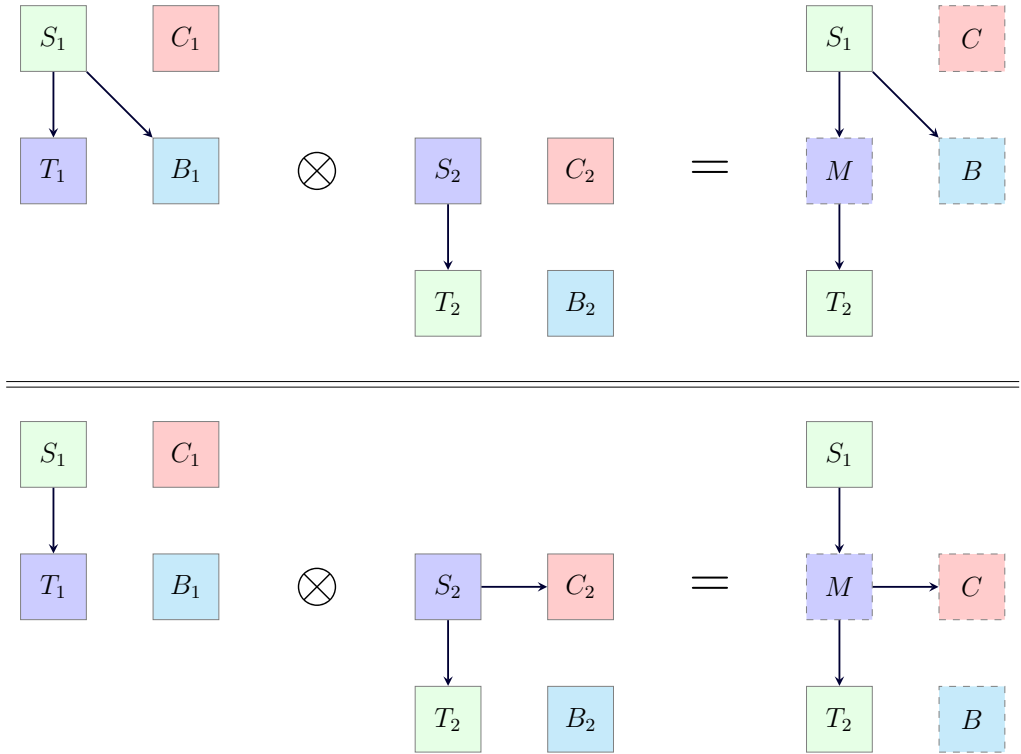

    \item \emph{Parallel Operation.}
    The graph $\oparal{G_1}{G_2}$ is obtained by taking the disjoint union of
    $G_1$ and $G_2$ and then {merging all four distinguished pairs of vertices}:
    \[
        S_1 \equiv S_2,\qquad T_1 \equiv T_2,\qquad
        B_1 \equiv B_2,\qquad C_1 \equiv C_2.
    \]
    These identifications yield the new vertices
    \[
        S=(S_1,S_2),\quad T=(T_1,T_2),\quad B=(B_1,B_2),\quad C=(C_1,C_2).
    \]
    As in the series case, {no edges are added or deleted}.
    All edges incident to any of the merged vertices are redirected to the new
    merged vertex, and edges from $G_1$ and $G_2$ that become duplicates are
    naturally merged in the union.
    The special vertex tuple of $\oparal{G_1}{G_2}$ is $(S,T,B,C)$.
    An example appears in Figure~\ref{fig:parallel}.

    \begin{figure}[H]
    	\begin{center}
    		\resizebox{\linewidth}{!}{
    			\begin{tikzpicture}[scale=0.6]
    				\node[vertex] (M1) {$M_1$};
    				\node[vertex,fill=blue!20] (S1) [above=of M1] {$S_1$};
    				\node[vertex,fill=yellow!20] (T1)  [below=of M1] {$T_1$};
    				\node[vertex,fill=cyan!20] (B1)  [right=of S1] {$B_1$};
    				\node[vertex,fill=red!20] (C1)  [below=of B1] {$C_1$};
    				\draw[arrow] (S1) -- (M1) ;
    				\draw[arrow] (S1) -- (B1) ;
    				\draw[arrow] (M1) -- (T1) ;
    				
    				\node (z) [right=of C1] {\fontsize{52}{58}\sffamily\bfseries$\oparal{}{}$};
    				
    				\node[vertex] (M2) [right=of z]  {$M_2$};
    				\node[vertex,fill=blue!20] (S2) [above=of M2]{$S_2$};
    				\node[vertex,fill=yellow!20] (T2)  [below=of M2] {$T_2$};
    				\node[vertex,fill=red!20] (C2)  [right=of M2] {$C_2$};
    				\node[vertex,fill=cyan!20] (B2)  [above=of C2] {$B_2$};
    				\draw[arrow] (S2) -- (M2);
    				\draw[arrow] (S2) -- (B2);
    				\draw[arrow] (M2) -- (T2);
    				\draw[arrow] (M2) -- (C2);
    				
    				\node (e) [right=of C2] {\fontsize{52}{58}\sffamily\bfseries$=$};
    				
    				\node[vertex] (aM1) [right=of e] {$M_1$};
    				\node[vertex,dashed,fill=blue!20] (aS) [above right=of aM1] {$S$};
    				\node[vertex] (aM2) [below right=of aS] {$M_2$};
    				\node[vertex,dashed,fill=yellow!20] (aT)  [below left=of aM2] {$T$};
    				\node[vertex,dashed,fill=red!20] (aC)  [right=of aM2] {$C$};
    				\node[vertex,dashed,fill=cyan!20] (aB)  [right=of aS] {$B$};
    				\draw[arrow] (aS) -- (aM1);
    				\draw[arrow] (aS) -- (aM2);
    				\draw[arrow] (aS) -- (aB);
    				\draw[arrow] (aM1) -- (aT);
    				\draw[arrow] (aM2) -- (aT);
    				\draw[arrow] (aM2) -- (aC);
    			\end{tikzpicture}
    		}
    	\end{center}
    	\caption{An example of the parallel operation $\oparal{}{}$. The dashed vertices are new vertices obtained by merging vertices of the same color in the original graphs. For example, $S_1$ and $S_2$ merge to form $S.$ Note that while the two original graphs were disjoint, the edge $(S, B)$ in the resulting graph is the merger of two edges $(S_1, B_1)$ and $(S_2, B_2)$ of the original graphs.}
    	\label{fig:parallel}
    \end{figure}
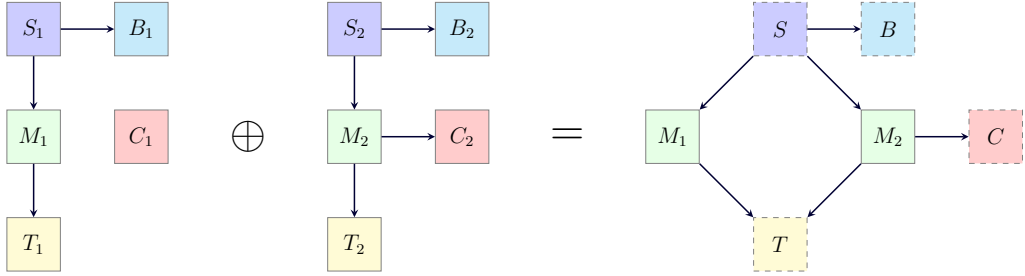

    \item \emph{Loop Operation.}
    The graph $\oloop{G_1}$ is formed by {adding four fresh vertices}
    \[
        S,\; T,\; B,\; C,
    \]
    which become the new distinguished start, terminate, break, and continue
    vertices, respectively.  We then {introduce five new edges} that connect
    these vertices to $G_1$:
    \[
        (S,S_1),\qquad (S,T),\qquad (T_1,S),\qquad (C_1,S),\qquad (B_1,T).
    \]
    No other edges of $G_1$ are modified.  Thus, the loop operation extends
    $G_1$ by creating a new “loop frame’’ around it while preserving all internal
    structure of $G_1$.  The special vertex tuple of $\oloop{G_1}$ is $(S,T,B,C)$,
    and an example is provided in Figure~\ref{fig:loop}.
    
    	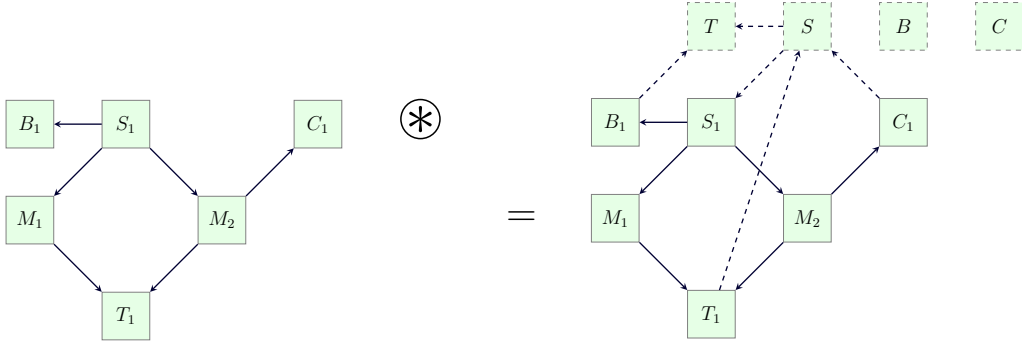
\begin{figure}[H]
    	\begin{center}
    		\resizebox{\linewidth}{!}{
    			\begin{tikzpicture}[scale=0.6]
    				\node[vertex] (aM1)  {$M_1$};
    				\node[vertex] (aS) [above right=of aM1] {$S_1$};
    				\node[vertex] (aM2) [below right=of aS] {$M_2$};
    				\node[vertex] (aT)  [below left=of aM2] {$T_1$};
    				\node[vertex] (aC)  [above right=of aM2] {$C_1$};
    				\node[vertex] (aB)  [left=of aS] {$B_1$};
    				\draw[arrow] (aS) -- (aM1);
    				\draw[arrow] (aS) -- (aM2);
    				\draw[arrow] (aS) -- (aB);
    				\draw[arrow] (aM1) -- (aT);
    				\draw[arrow] (aM2) -- (aT);
    				\draw[arrow] (aM2) -- (aC);
    				
    				\node (star) [right=of aC] {\fontsize{52}{58}\sffamily\bfseries$\oloop{}$};
    				
    				\node (e) [below right=of star] {\fontsize{52}{58}\sffamily\bfseries$=$};
    				
    				\node[vertex] (baM1) [right=of e] {$M_1$};
    				\node[vertex] (baS) [above right=of baM1] {$S_1$};
    				\node[vertex,dashed] (S) [above right=of baS] {$S$};
    				\node[vertex,dashed] (B) [right=of S] {$B$};
    				\node[vertex,dashed] (C) [right=of B] {$C$};
    				\node[vertex] (baM2) [below right=of baS] {$M_2$};
    				\node[vertex] (baT)  [below left=of baM2] {$T_1$};
    				\node[vertex] (baC)  [above right=of baM2] {$C_1$};
    				\node[vertex] (baB)  [left=of baS] {$B_1$};
    				\node[vertex,dashed] (T) [left=of S] {$T$};
    				\draw[arrow] (baS) -- (baM1);
    				\draw[arrow] (baS) -- (baM2);
    				\draw[arrow] (baS) -- (baB);
    				\draw[arrow] (baM1) -- (baT);
    				\draw[arrow] (baM2) -- (baT);
    				\draw[arrow] (baM2) -- (baC);
    				\draw[arrow,dashed] (S) -- (baS);
    				\draw[arrow,dashed] (S) -- (T);
    				\draw[arrow,dashed] (baT) -- (S);
    				\draw[arrow,dashed] (baB) -- (T);
    				\draw[arrow,dashed] (baC) -- (S);
    			\end{tikzpicture}
    		}
    	\end{center}
    	\caption{An example of the loop operation. Newly introduced vertices and edges are shown as dashed.}
    	\label{fig:loop}
    \end{figure}

\end{enumerate}

We say that an $\slap$ graph $G=(V, E, S, T, B, C)$ is closed if there are no incoming edges to either $B$ or $C.$

Given an $\slap$ graph $G=(V, E, S, T, B, C)$, its \emph{grammatical decomposition} 
is the parse tree of $G$ according to grammar~\eqref{gram:graph}. 
In other words, the grammatical decomposition is a tree whose leaves are 
atomic $\slap$ graphs, and whose internal nodes correspond to applications 
of the three SPL operations 
\(\oseries{c_1}{c_2}\), \(\oparal{c_1}{c_2}\), and \(\oloop{c_1}\) 
to their children \(c_1\) and \(c_2\).  

Intuitively, each node of the decomposition tree represents a well-defined
substructure of the program’s control-flow graph.  
A leaf encodes the smallest indivisible fragment—typically a single edge or
basic block boundary—while an internal node represents a larger region obtained
by composing its child regions either in sequence (series), in parallel
(branching/merging), or as a loop body.  
Thus, every subtree corresponds to a contiguous portion of the original program, 
and moving upward in the tree corresponds to assembling progressively larger
program fragments.  
When we reach the root of the grammatical decomposition, the entire control-flow
graph \(G\) has been reconstructed, meaning the root node corresponds exactly to
the whole program.

For example, Figure~\ref{fig:decompo} (bottom left) shows the grammatical 
decomposition of the $\slap$ graph in Figure~\ref{fig:decompo} (bottom right).  
Each subtree of this decomposition represents a meaningful fragment of the CFG,
and the root represents the complete program structure.

There is a natural surjective homomorphism $\cfg{\cdot}$ between programs and $\slap$ graphs, defined as follows:
$$
\begin{matrix}
\cfg{\epsilon} = A_\epsilon & \cfg{\texttt{break}} = A_{\texttt{break}} & \cfg{\texttt{continue}} = A_{\texttt{continue}}
\end{matrix}
$$
$$
	\cfg{P_1 ; P_2} = \oseries{\cfg{P_1}}{\cfg{P_2}}
$$
$$
	\cfg{\texttt{if} ~\varphi~ \texttt{then} ~P_1~ \texttt{else} ~P_2~ \texttt{fi}} = \oparal{\cfg{P_1}}{\cfg{P_2}}
$$
$$
\cfg{\texttt{while} ~\varphi~ \texttt{do} ~P_1~ \texttt{od}} = \oloop{\cfg{P_1}}
$$

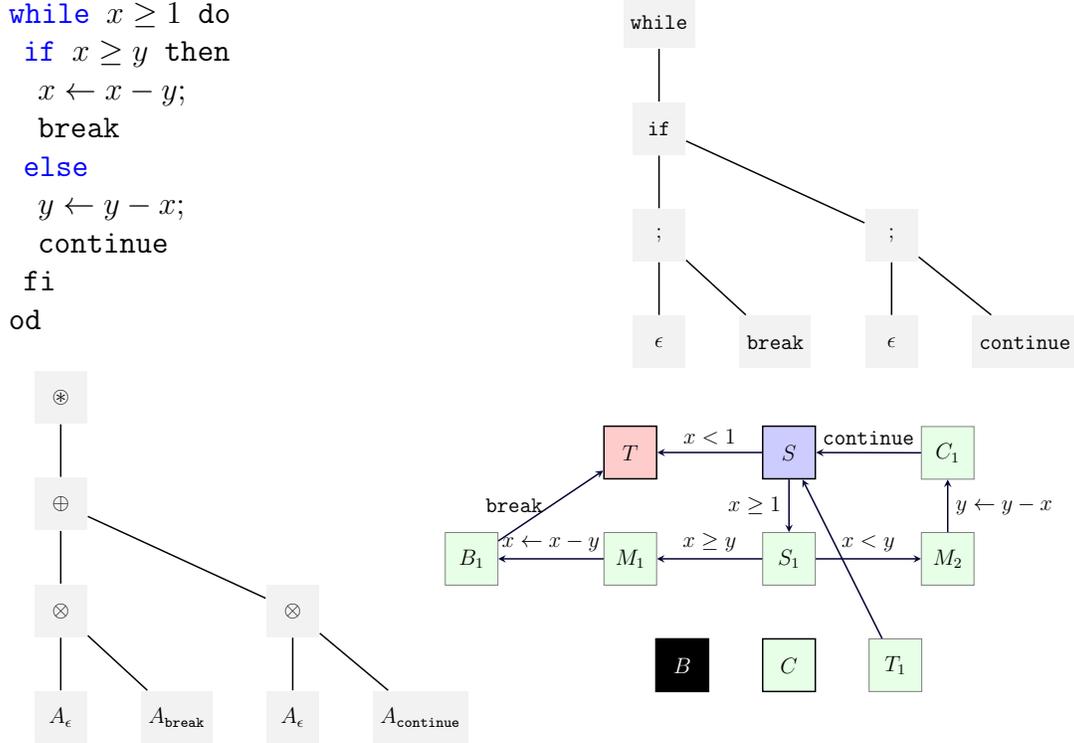
\begin{figure}
	\hspace{-10pt}\begin{subfigure}[]{0.3\textwidth}
		\begin{lstlisting}[mathescape,numbers=none]
while $x \geq 1$ do
	if $x \geq y$ $\text{then}$
		$x \gets x - y;$
		$\text{break}$
	else
		$y \gets y - x;$
		$\text{continue}$
	fi
od
		\end{lstlisting} 
	\end{subfigure}
	\hfill
	\begin{subfigure}[]{0.2\textwidth}
		\scalebox{0.7}{
		\begin{tikzpicture}[scale=0.4]
			\node[rect] (eps1) {$\epsilon$};
			\node[rect] (break) [right=of eps1] {$\texttt{break}$};
			\node[rect] (eps2) [right=of break] {$\epsilon$};
			\node[rect] (continue) [right=of eps2] {$\texttt{continue}$};
			\node[rect] (sem1) [above=of eps1] {$;$};	
			\node[rect] (sem2) [above=of eps2] {$;$};	
			\node[rect] (if) [above=of sem1] {$\texttt{if}$};	
			\node[rect] (while) [above=of if] {$\texttt{while}$};	
			\draw[thick] (while) -- (if) ;
			\draw[thick] (if) -- (sem1) ;
			\draw[thick] (if) -- (sem2) ;
			\draw[thick] (sem1) -- (eps1) ;
			\draw[thick] (sem1) -- (break) ;
			\draw[thick] (sem2) -- (eps2) ;
			\draw[thick] (sem2) -- (continue) ;
		\end{tikzpicture}
	}
	\end{subfigure}
	\hspace{3cm}
	\\
	\hspace{-10pt}\begin{subfigure}[]{0.2\textwidth}
		\scalebox{0.7}{
		\begin{tikzpicture}[scale=0.4]
			\node[rect] (eps1) {$A_\epsilon$};
			\node[rect] (break) [right=of eps1] {$A_\texttt{break}$};
			\node[rect] (eps2) [right=of break] {$A_\epsilon$};
			\node[rect] (continue) [right=of eps2] {$A_\texttt{continue}$};
			\node[rect] (sem1) [above=of eps1] {$\oseries{}{}$};	
			\node[rect] (sem2) [above=of eps2] {$\oseries{}{}$};	
			\node[rect] (if) [above=of sem1] {$\oparal{}{}$};	
			\node[rect] (while) [above=of if] {$\sloop$};	
			\draw[thick] (while) -- (if) ;
			\draw[thick] (if) -- (sem1) ;
			\draw[thick] (if) -- (sem2) ;
			\draw[thick] (sem1) -- (eps1) ;
			\draw[thick] (sem1) -- (break) ;
			\draw[thick] (sem2) -- (eps2) ;
			\draw[thick] (sem2) -- (continue) ;
		\end{tikzpicture}}
	\end{subfigure}
	\hfill
	\begin{subfigure}[]{0.3\textwidth}
		\scalebox{0.7}{
		\begin{tikzpicture}[scale=0.4,node distance=1cm and 2cm]
			
			\node[vertex] (baM1)  {$M_1$};
			\node[vertex] (baB)  [left=of baM1] {$B_1$};
			\node[vertex] (baS) [right=of baM1] {$S_1$};
			\node[vertex] (baM2) [right=of baS] {$M_2$};
			\node[vertext] (T) [above=of baM1] {$T$};
			
			\node[vertexs] (S) [above=of baS] {$S$};
			\node[vertex] (baC)  [above=of baM2] {$C_1$};

			\node[vertexc] (C) [below=of baS] {$C$};
			\node[vertexb] (B) [left=1cm of C] {$B$};
			
			\node[vertex] (baT)  [right=1cm of C] {$T_1$};

			\draw[arrow] (S) -- (T) node [midway, above] {$x<1$};
			\draw[arrow] (S) -- (baS) node [midway, left] {$x\geq 1$};
			\draw[arrow] (baS) -- (baM1) node [midway, above] {$x \geq y$};
			\draw[arrow] (baS) -- (baM2) node [midway, above] {$x < y$};
			\draw[arrow] (baM1) -- (baB) node [midway, above] {$x \gets x - y$};
			\draw[arrow] (baM2) -- (baC) node [midway, right] {$y \gets y - x$};
			\draw[arrow] (baB) -- (T) node [midway, left] {$\texttt{break}$};
			\draw[arrow] (baC) -- (S) node [midway, above] {$\texttt{continue}$};
			\draw[arrow] (baT) -- (S);
		\end{tikzpicture}}
	\end{subfigure}
	\hspace{4cm}
	\caption{A program $P$ (top left), its parse tree (top right), the corresponding parse tree of $G = \cfg{P},$ aka the grammatical decomposition of $G$ (bottom left) and the graph $G = \cfg{P}$ (bottom right). The edges of the graph are labeled according to the commands/conditions of the program.}
	\label{fig:decompo}
\end{figure}

It is easy to verify that this homomorphism matches the usual definition of control-flow graphs of programs and that $P$ is closed if and only if $\cfg{P}$ is closed. Thus, the set of $\slap$ graphs is precisely the same as the set of control-flow graphs of structured programs.

Moreover, given a program $P,$ we can simply parse it according to the grammar in~\eqref{gram:prog} and obtain a parse tree, then apply our homomorphism to the parse tree to find an equivalent parse tree of its control-flow graph based on the grammar in~\eqref{gram:graph}. This is exactly the grammatical decomposition of the CFG, which shows us how the control-flow graph of the current program can be obtained by merging the control-flow graphs of its smaller parts using one of our three operations. Since parsing a context-free grammar takes linear time in the size of the program, i.e.~$O(n),$ thus we obtain a grammatical decomposition of our control-flow graph in $O(n)$ time, too. Figure~\ref{fig:decompo} shows an example. Note that in this example, we are also labeling the edges of our control-flow graph using commands/conditions in the program. The labels are not part of the graph. They are included to improve readability.

In our experiments in Section~\ref{sec:exp}, we find the grammatical decompositions by parsing the program. However, it is possible to find them directly based on the CFG, too. Formally, the problem of deciding whether a graph $G$ is $\slap$ and finding its grammatical decomposition is equivalent to parsing a context-free grammar. Thus, the CYK algorithm is applicable~\cite{sakai1961syntax}.

We note that our grammatical decompositions are conceptually more similar to series-parallel graphs than tree decompositions and treewidth-based algorithms. Specifically, we do not have a separate parameter such as treewidth or pathwidth. Instead, we only focus on the set of graphs that can be obtained using the three operations mentioned above. This is how we sidestep the NP-hardness of register allocation in the next section. Instead of solving it over all graphs, we solve it only on $\slap$ graphs, which are the same as CFGs and thus cover all real-world instances of the problem. Intuitively, the main reason for the efficiency of treewidth/pathwidth-based algorithms is that they are able to repeatedly find small cuts in the CFG and use them to recursively cut it into small parts. This is also the insight behind our approach. The difference is that treewidth-based approaches find cuts of size at most $8,$ and pathwidth-based approaches find cuts of size $2 \cdot d + 1$ where $d$ is the nesting depth, whereas our cuts are precisely our special vertices and thus of size $4.$ As we shall soon see, smaller cuts lead to more efficient algorithms.

\section{Register Allocation}

As an application of our concept of grammatical decomposition, we consider the classical problem of minimum-cost register allocation as formalized in~\cite{DBLP:conf/cc/Krause13}. A cost is assigned to each allocation of variables to registers, which is supposed to model the time wasted on spills or rematerialization. We note that this is a more general formulation of the problem than those of~\cite{THORUP1998159,DBLP:journals/pacmpl/ConradoGL23} which only focus on deciding whether it is possible to avoid spilling altogether and obtain a cost of zero. See Section~\ref{sec:last}. In~\cite{DBLP:conf/cc/Krause13}, a treewidth-based algorithm is provided which obtains an optimal register allocation in XP time $O(|G| \cdot |\vars|^{2 \cdot (t + 1) \cdot r}),$ where $G$ is the control-flow graph, $\vars$ is the set of program variables, $t$ is the treewidth of $G$ and $r$ is the number of available registers. If both $r$ and $t$ are constants, then this is a polynomial-time algorithm. However, since the treewidth of programs in languages such as C can be up to 7~\cite{DBLP:journals/dam/KrauseLS20}, this algorithm's worst-case runtime over CFGs is $O(|G| \cdot |\vars|^{\textcolor{red}{16 \cdot r}}).$ In this section, we present an alternative algorithm using our grammatical decomposition which provides a significant runtime improvement and runs in time $O(|G| \cdot |\vars|^{\textcolor{red}{5 \cdot r}}).$ This is a huge asymptotic improvement over the algorithm of~\cite{DBLP:journals/dam/KrauseLS20}. 

\subsection{Problem Definition}

Suppose we are given a program $P$ with control-flow graph $G=\cfg{P} = (V, E, S, T, B, C).$ Let $[r] = \{0, 1, \ldots, r-1\}$ be the set of available registers and $\vars$ the set of our program variables. Every variable $v \in \vars$ has a lifetime $\lt{v}$ which is a connected subgraph of $G.$ See~\cite{poletto1999linear} for a more detailed treatment of lifetimes. Since lifetimes can be computed by a simple data-flow analysis, we assume without loss of generality that they are given as inputs to our algorithm. For a vertex $v$ or edge $e$ of $G,$ we denote the set of variables that are alive at this vertex/edge by $L(v)$ or $L(e).$ An \emph{assignment} is a function $f: \vars \rightarrow [r] \cup \{\perp\}$ which maps each variable either to a register or to $\perp.$ The latter models the variable being spilled. An assignment is valid if it does not map two variables with intersecting lifetimes to the same register. We denote the set of all valid assignments by $F.$

The interference graph of our program $P$ is a graph $\mathbb{I} = (\vars, E_{\mathbb I})$ in which there is one vertex for each program variable and there is an edge $\{u, v\}$ if the variables $u$ and $v$ can be alive at the same time, i.e.~$\lt{u} \cap \lt{v} \neq \emptyset.$ Any valid assignment $f$ is a valid coloring of a subset of vertices of $\mathbb{I}$ with colors in $[r].$ This correspondence between register allocation and graph coloring is well-known and due to Chaitin~\cite{DBLP:conf/pldi/Chaitin82}. We note that for every vertex $v,$ the set $L(v)$ of variables alive at $v$ forms a clique in $\mathbb I.$

We provide an example taken from~\cite{example}. Figure~\ref{fig:regaloc} shows a program $P$ and its control-flow graph $G = \cfg{P},$ including live variables at each vertex, and the interference graph $\mathbb{I}.$ Our goal is to color a subset of vertices of $\mathbb{I}$ with $r$ colors, where $r$ is the number of available registers. A complete coloring with $4$ colors is shown in the figure. This avoids any spilling. We also show a partial coloring with $3$ colors and some spilling.

\begin{figure}
	\begin{subfigure}{0.33\textwidth}
		\begin{lstlisting}[mathescape,numbers=none]
while $\varphi_1$ do
	$a \gets b+c$;
	$d \gets -a$;
	$e \gets d+f$;
	if $\varphi_2$ $\texttt{then}$
		$f \gets 2 \cdot e$
	else
		$b \gets d + e$;
		$e \gets e - 1$
	fi
	$b \gets f + c$
od
		\end{lstlisting} 
	\end{subfigure}
	\begin{subfigure}{0.33\linewidth}
		\begin{center}
			\scalebox{0.8}{
		\begin{tikzpicture}[scale=0.6]
			\node[vertex] (s)  {$S$};
			\node[vertexla] (ls) [above=0cm of s]  {$\{b, c, f\}$};
			\node[vertex] (t) [right=of s] {$T$};
			\node[vertexla] (lt) [above=0cm of t]  {$\emptyset$};
			\node[vertex] (b) [right=of t]  {$B$};
			\node[vertexla] (lb) [above=0cm of b]  {$\emptyset$};
			\node[vertex] (c) [right=of b]  {$C$};
			\node[vertexla] (lc) [above=0cm of c]  {$\emptyset$};
			\node[vertex] (c1) [below=of t]  {$C_1$};
			\node[vertexla] (lc1) [below=0cm of c1]  {$\{b, c, f\}$};
			\node[vertex] (b1) [right=of c1]  {$B_1$};
			\node[vertexla] (lb1) [below=0cm of b1]  {$\emptyset$};
			\node[vertex] (1) [below=of s]  { };
			\node[vertexla] (l1) [left=0cm of 1]  {$\{b, c, f\}$};
			\node[vertex] (2) [below=of 1]  { };
			\node[vertexla] (l2) [left=0cm of 2]  {$\{a, c, f\}$};
			\node[vertex] (3) [below=of 2]  { };
			\node[vertexla] (l3) [left=0cm of 3]  {$\{c, d, f\}$};
			\node[vertex] (if) [below=of 3]  { };
			\node[vertexla] (lif) [left=0cm of if]  {$\{c, d, e, f\}$};
			\node[vertex] (then) [right=3cm of if]  { };
			\node[vertexla] (lthen) [right=0cm of then]  {$\{c, e, f\}$};
			\node[vertex] (fi) [below=of if] {};
			\node[vertexla] (lfi) [left=0cm of fi]  {$\{c, f\}$};
			\node[vertex] (5) [below=of fi] {};
			\node[vertexla] (l5) [left=0cm of 5]  {$\{b, c, f\}$};
			
			\draw[arrow] (1) -- (2) node [midway, left] {$a \gets b+c$};
			\draw[arrow] (2) -- (3) node [midway, left] {$d \gets -a$};
			\draw[arrow] (3) -- (if) node [midway, left] {$e \gets d+f$};
			\draw[arrow] (if) -- (then) node [midway, above] {$\neg\varphi_2, b \gets d+e$};
			\draw[arrow] (if) -- (fi) node [midway, left] {$\varphi_2, f \gets 2 \cdot e$};
			\draw[arrow] (then) -- (fi) node [midway, below right] {$e \gets e-1$};
			\draw[arrow] (fi) -- (5) node [midway, right] {$b \gets f+c$};
			\draw[arrow] (s) -- (1) node [midway, left] {$\varphi_1$};
			\draw[arrow] (s) -- (t) node [midway, above] {$\neg \varphi_1$};
			\draw [arrow] (5) to [out=150,in=210] (s);
			%\draw[arrow, bend left=90] (5) -- (s);
			\draw[arrow] (b1) -- (t);
			\draw[arrow] (c1) -- (s);
		\end{tikzpicture}}
	\end{center}
	\end{subfigure}

\vspace{1em}
\begin{subfigure}{0.3\textwidth}
	\scalebox{0.72}{
	\begin{tikzpicture}[scale=0.6]
		\node[vertex] (a)  {$a$};
		\node[vertex] (b) [below=of a] {$b$};
		\node[vertex] (c) [right=of a] {$c$};
		\node[vertex] (f) [right=of b] {$f$};
		\node[vertex] (e) [right=of c] {$e$};
		\node[vertex] (d) [right=of f] {$d$};
		\draw[thick] (b) -- (c);
		\draw[thick] (b) -- (f);
		\draw[thick] (f) -- (c);
		\draw[thick] (a) -- (c);
		\draw[thick] (a) -- (f);
		\draw[thick] (d) -- (c);
		\draw[thick] (d) -- (f);
		\draw[thick] (e) -- (c);
		\draw[thick] (e) -- (d);
		\draw[thick] (e) -- (f);
	\end{tikzpicture}}
\end{subfigure}
\hfill
\begin{subfigure}{0.3\textwidth}
	\scalebox{0.72}{
	\begin{tikzpicture}[scale=0.6]
		\node[vertex, fill=blue!40] (a)  {$a$};
		\node[vertex, fill=green!40] (b) [below=of a] {$b$};
		\node[vertex, fill=red!40] (c) [right=of a] {$c$};
		\node[vertex, fill=yellow!40] (f) [right=of b] {$f$};
		\node[vertex, fill=green!40] (e) [right=of c] {$e$};
		\node[vertex, fill=blue!40] (d) [right=of f] {$d$};
		\draw[thick] (b) -- (c);
		\draw[thick] (b) -- (f);
		\draw[thick] (f) -- (c);
		\draw[thick] (a) -- (c);
		\draw[thick] (a) -- (f);
		\draw[thick] (d) -- (c);
		\draw[thick] (d) -- (f);
		\draw[thick] (e) -- (c);
		\draw[thick] (e) -- (d);
		\draw[thick] (e) -- (f);
	\end{tikzpicture}}
\end{subfigure}
\hfill
\begin{subfigure}{0.3\textwidth}
	\scalebox{0.72}{
	\begin{tikzpicture}[scale=0.6]
		\node[vertex] (a)  {$a$};
		\node[vertex, fill=green!40] (b) [below=of a] {$b$};
		\node[vertex, fill=red!40] (c) [right=of a] {$c$};
		\node[vertex] (f) [right=of b] {$f$};
		\node[vertex, fill=green!40] (e) [right=of c] {$e$};
		\node[vertex, fill=blue!40] (d) [right=of f] {$d$};
		\draw[thick] (b) -- (c);
		\draw[thick] (b) -- (f);
		\draw[thick] (f) -- (c);
		\draw[thick] (a) -- (c);
		\draw[thick] (a) -- (f);
		\draw[thick] (d) -- (c);
		\draw[thick] (d) -- (f);
		\draw[thick] (e) -- (c);
		\draw[thick] (e) -- (d);
		\draw[thick] (e) -- (f);
	\end{tikzpicture}}
\end{subfigure}
\caption{An Example Program $P$ (top left), its control-flow graph $G = \cfg{P}$ (top right) in which every vertex is labeled by its set of live variables in red, the interference graph $\mathbb I$ (bottom left), a coloring of all vertices of $\mathbb{I}$ with 4 colors corresponding to allocating all variables to 4 registers (bottom center), and a coloring of a subset of vertices of $\mathbb{I}$ with 3 colors corresponding to spilling the variables $a$ and $f$ (bottom right).}
\label{fig:regaloc}
\end{figure}
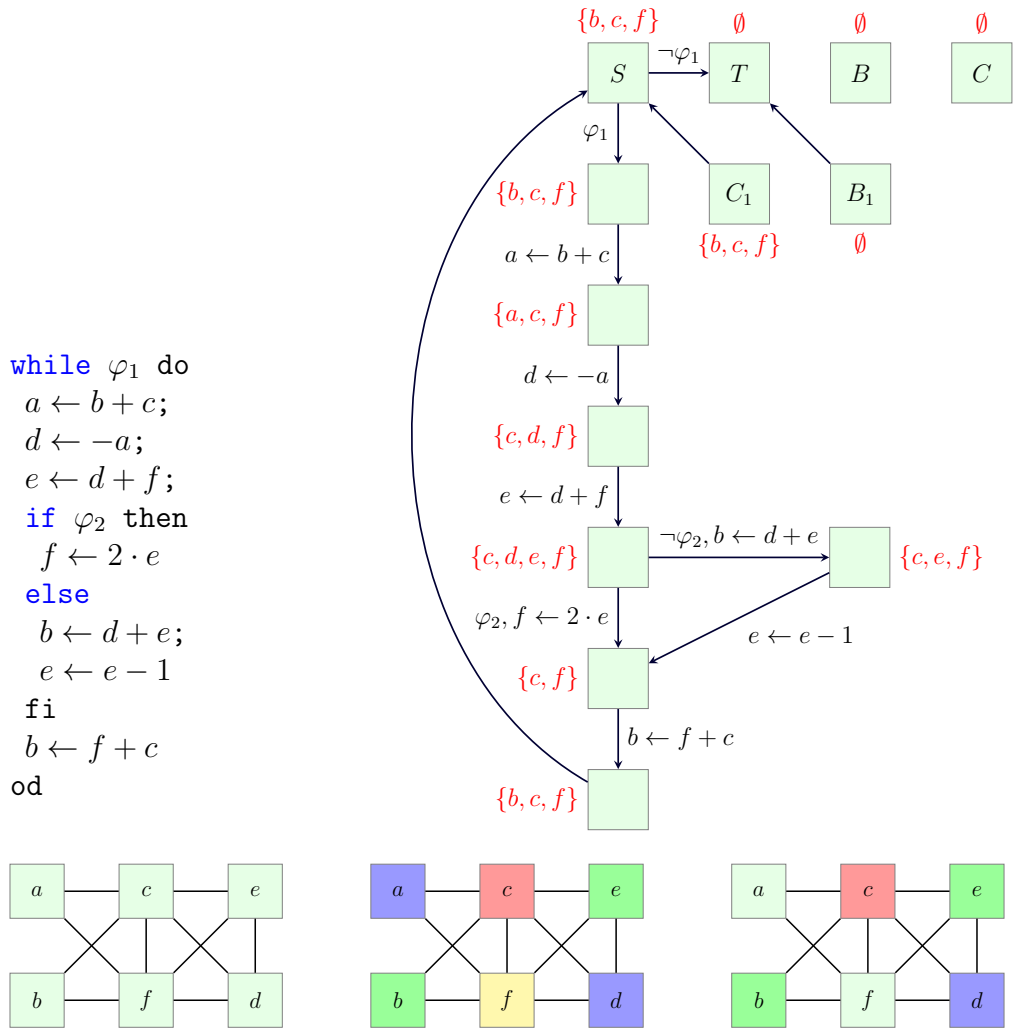

A \emph{cost function}~\cite{DBLP:conf/cc/Krause13} is a function $c: E \times F \rightarrow [0, \infty).$ For an edge $e \in E$ of the control-flow graph, which corresponds to one command of the program, $c(e, f)$ is the cost of running this command when the registers are allocated as per $f.$ We assume that $c(e, f)$ only depends on the allocation decisions for variables that are alive at $e.$ Following~\cite{DBLP:conf/cc/Krause13}, we further assume constant-time oracle access to evaluations of $c.$ In practice, $c$ is often obtained by profiling. Different optimization goals, such as total runtime or code size, may be modeled by choosing a suitable function $c.$

The problem of optimal register allocation provides $P, G, r, \vars, c$ and the lifetimes of variables as input and asks for an assignment $f \in F$ that minimizes the total cost, i.e.~our goal is to $$\text{minimize } \sum_{e \in E} c(e, f)$$ by choosing the best possible $f.$

\subsection{Our Algorithm} \label{sec:algo}

We now show how to perform dynamic programming on the grammatical decomposition of our control-flow graph $G$ to find an optimal register allocation. Our algorithm is quite simple and elegant. We process our grammatical decomposition in a bottom-up fashion and for every subgraph $H = (V_H, E_H, S_H, T_H, B_H, C_H)$ appearing in the grammatical decomposition, define the following dynamic programming variables:
$$
\begin{array}{rl}
\opt{H, f'} = & \text{The minimum total cost $\sum_{e \in E_H} c(e, f)$}\\
& \text{of an assignment $f$ over $H$ such that} \\ 
& f_{|L(S_H) \cup L(T_H) \cup L(B_H) \cup L(C_H)}=f'.
\end{array}$$
Intuitively, for every possible assignment $f'$ of the variables that are alive at any of the distinguished vertices $(S_H, T_H, B_H, C_H),$ we are asking for the minimum total cost of an assignment $f$ over all variables that agrees with $f'$ and extends it. After we compute our $\opt{\cdot, \cdot}$ values, the final answer of the algorithm, i.e.~the minimum cost of a register allocation, is simply $\min_{f} \opt{G, f}.$

We now show how to compositionally compute $\opt{H, f'}$ for any $\slap$ graph $H$ assuming that we have already computed $\opt{\cdot, \cdot}$ values for the $\slap$ subgraphs of $H.$ This is done by casework:
\begin{itemize}
	\item \emph{Atomic Graphs:} 	
	If $H \in \{A_\epsilon, A_{\texttt{break}}, A_{\texttt{continue}}\},$ then $H$ does not have any vertices other than the distinguished vertices $(S_H, T_H, B_H, C_H).$ Thus, all variables that are alive at any point in $H$ are also alive at one of the distinguished vertices and we simply set $\opt{H, f'} = c(e, f')$ for every partial allocation $f'.$ Here, $e$ is the unique edge in $H.$
\end{itemize}
	
	\para{Compatible Assignments} We say two partial assignments $f_1: \vars_1 \rightarrow [r] \cup \{\perp\}$ and $f_2: \vars_2 \rightarrow [r] \cup \{\perp\}$ are compatible and write $f_1 \leftrightarrows f_2$ if $\forall v \in \vars_1 \cap \vars_2 \quad f_1(v) = f_2(v).$
	Informally, $f_1$ and $f_2$ never make conflicting decisions on any variable $v$ but we have no restrictions on variables that are decided only by $f_1$ or only by $f_2.$ In other words, $f_1$ and $f_2$ can be combined in the same total assignment.

\begin{itemize}
	\item \emph{Series Operation:} If $H = \oseries{H_1}{H_2},$ then we have
	
	 $$\opt{\oseries{H_1}{H_2}, f'} =$$ $$\min_{\begin{matrix}f' \leftrightarrows f_1'\\ f' \leftrightarrows f_2'\\ f_1' \leftrightarrows f_2'\end{matrix}} \left(\opt{H_1, f_1'} + \opt{H_2, f_2'} - \sum_{e \in E_{H_1} \cap E_{H_2}} c(e, f_1') \right).$$

	The correctness of this calculation is an immediate corollary of the definition of our series operation. By construction, we have $L(B_{H_1}) = L(B_{H_2}) = L(B_H)$ and $L(C_{H_1}) = L(C_{H_2}) = L(C_H)$ and also $L(T_{H_1}) = L(S_{H_2}).$ Moreover, every edge of $H_1$ and $H_2$ is preserved in $\oseries{H_1}{H_2}.$ Thus, the total cost is simply the sum of costs in the two components. We should also be careful not to double-count the cost of edges that appear in both $H_1$ and $H_2$. Thus, we subtract these. Of course, the partial assignments $f', f_1'$ and $f_2'$ should be pairwise compatible.
	
	\item \emph{Parallel Operation:} This case is handled exactly as in the series case:
	$$
	\displaystyle \opt{\oparal{H_1}{H_2}, f'} =$$ $$\displaystyle \min_{\begin{matrix}f' \leftrightarrows f_1'\\ f' \leftrightarrows f_2'\\ f_1' \leftrightarrows f_2'\end{matrix}} \left(\opt{H_1, f_1'} + \opt{H_2, f_2'} - \sum_{e \in E_{H_1} \cap E_{H_2}} c(e, f_1') \right).
	$$
	This is because our parallel operation also preserves all the edges in $H_1$ and $H_2.$ Note that we might have edges that appear in both $H_1$ and $H_2,$ e.g.~we might have both $(S_{H_1}, B_{H_1})$ and $(S_{H_2}, B_{H_2})$ which are the same as the edge $(S_H, B_H).$ Thus, the total cost is the sum of costs in the components $H_1$ and $H_2$ minus the cost of their common edges. As before, we should also ensure that the partial assignments are all compatible.
	
	\item \emph{Loop Operation:} Suppose $H = \oloop{H_1}.$ In this case, by our construction, $H$ has the same vertices and edges as $H_1$ except for the introduction of the four new distinguished vertices $(S_H, T_H, B_H, C_H)$ and five new edges $e_1 = (S_H, S_{H_1}), e_2 = (S_H, T_H), e_3 = (T_{H_1}, S_H), e_4 = (C_{H_1}, S_H)$ and $e_5 = (B_{H_1}, T_H).$ Thus, our total cost is simply the total cost in $H_1$ plus the cost incurred at these new edges. Therefore, we have:
	$$
	\opt{\oloop{H_1}, f'} = \min_{f_1' \leftrightarrows f'} \left( \opt{H_1, f_1'} + \sum_{i=1}^5 c(e_i, f' \cup f_1') \right).
	$$
\end{itemize}
This concludes our algorithm which computes the cost of an optimal assignment $f.$ As is standard in dynamic programming approaches, $f$ itself can be obtained by retracing the steps of the algorithm and remembering the choices that led to the minimum values at every step.

\begin{theorem} \label{thm:main}
	Given a program $P$ with variables $\vars$ and control-flow graph $G=\cfg{P},$ the number $r$ of available registers and a cost function $c(\cdot, \cdot)$ as input, our algorithm above finds an optimal allocation of registers, i.e.~an optimal assignment function $f,$ in time $O(|G| \cdot |\vars|^{5 \cdot r}).$
\end{theorem}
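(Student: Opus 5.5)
We split the argument into correctness and running time, and handle both by structural induction on the grammatical decomposition of $G$.

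\textbf{Correctness.} The invariant is: for every node $H$ of the decomposition and every partial assignment $f'$ of $L(S_H)\cup L(T_H)\cup L(B_H)\cup L(C_H)$, the computed value equals $\opt{H,f'}$. Here $\opt{H,f'}$ is the minimum of $\sum_{e\in E_H} c(e,f)$ over valid assignments $f$ extending $f'$.

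The base case is the atomic graphs. All live variables of such a graph sit at distinguished vertices, and $c(e,\cdot)$ depends only on $L(e)$. So the value is $c(e,f')$.

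For the inductive step we use a separator property of each operation: every variable alive in both $H_1$ and $H_2$ is alive at a shared distinguished vertex. In the series and parallel cases, the only vertices common to $H_1$ and $H_2$ are the merged ones. Since each lifetime $\lt{v}$ is connected, a variable meeting both parts must pass through a merged vertex.

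We then show both inequalities.
\begin{itemize}
\item[($\geq$)] An optimal $f$ for $(H,f')$ restricts to compatible $f_1',f_2'$. Its cost splits as the cost over $E_{H_1}$ plus the cost over $E_{H_2}$, minus the cost over the common edges.
\item[($\leq$)] Given optimal extensions $f_1$ of $f_1'$ and $f_2$ of $f_2'$, they agree on shared variables by the separator property, so we glue them into one $f$. Validity of $f$ needs an argument. Two variables whose lifetimes intersect must intersect within one part, or else at a merged vertex, where $f_1'$ fixes both. So a conflict in $f$ would already be a conflict in $f_1$ or in $f_2$.
\end{itemize}
Costs of common edges depend only on variables live at those edges, which lie at distinguished vertices. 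Hence subtracting $c(e,f_1')$ is well defined.

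The loop case is analogous. The only vertices shared between the new frame and $H_1$ are $S_{H_1},T_{H_1},B_{H_1},C_{H_1}$. Every variable live on a new edge $e_i$ is live at an endpoint, and each endpoint is either a new distinguished vertex or one of $H_1$'s. Therefore $c(e_i,f'\cup f_1')$ is determined.

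At the root, minimizing over $f'$ gives the optimum. Retracing the argmins recovers an optimal $f$ in the same time.

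\textbf{Running time.} The decomposition has $O(|G|)$ nodes and is obtained in linear time, as shown earlier. Fix a node $H$ with children $H_1$ and possibly $H_2$. We bound the number of tuples $(f',f_1',f_2')$ enumerated at $H$.

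This counting step is the main obstacle. We need the right count of \emph{distinct} relevant vertices, and the bound requires each such vertex to contribute at most $|\vars|^r$ choices.

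\begin{itemize}
\item \emph{Per-vertex count.} Take a single vertex $v$. By validity, the variables in $L(v)$ mapped to registers are injective into $[r]$, and all other variables are mapped to $\perp$. So a partial assignment on $L(v)$ is determined by choosing, for each register, which variable of $L(v)$ occupies it, or none. This gives at most $(|\vars|+1)^r = O(|\vars|^r)$ choices.
\item \emph{Series.} The union of the distinguished vertices of $H$, $H_1$ and $H_2$ is $\{S_1, M, T_2, B, C\}$, which is 5 vertices.
\item \emph{Parallel.} The union is $\{S,T,B,C\}$, which is 4 vertices.
\item \emph{Loop.} The new $S,T,B,C$ plus $S_{H_1},T_{H_1},B_{H_1},C_{H_1}$ naively give 8 vertices. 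But lifetimes on the new vertices are constrained: in a closed program nothing flows into $B_H,C_H$, and $L(T_H)$ and $L(S_H)$ are fixed by the incident edges. So I would argue, possibly via a tighter accounting in which the loop step only enumerates $f_1'$ and the new vertices add at most one further independent vertex's worth of choices, that at most 5 vertices' worth of independent choices arise.
\end{itemize}

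A compatible tuple is determined by its values on the union of the relevant live sets. Together with the per-vertex count, this gives $O(|\vars|^{5r})$ tuples per node. Each tuple needs $O(1)$ oracle calls, since the number of shared edges and new edges is constant. Summing over the $O(|G|)$ nodes yields $O(|G|\cdot|\vars|^{5r})$.
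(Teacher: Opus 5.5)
Your correctness sketch, your per-vertex count $(|\vars|+1)^r$ (slightly cleaner than the paper's $O(r\cdot|\vars|^r)$), and your series and parallel counts (five vertices $S_{H_1},M,T_{H_2},B,C$, resp.\ four) agree with the paper. The gap is the loop case. That is exactly where the exponent $5r$ has to be earned, and you give no argument there.

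The accounting you suggest also fails as stated. The fresh vertices $B_H$ and $C_H$ are isolated only inside $H$. Once $H$ is composed with a \texttt{break} or \texttt{continue} inside an enclosing loop, they receive edges in $G$ (to the enclosing loop's $T$ and $S$). They can then carry variables that are live at none of $S_{H_1},T_{H_1},B_{H_1},C_{H_1}$. So ``enumerate $f_1'$ plus at most one further vertex'' is false: this route gives at least $4+2$ vertices. Your premise that in a closed program nothing flows into $B_H,C_H$ is also wrong, since closedness only constrains the root, and the theorem does not assume it anyway.

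The missing idea runs the other way. In $G$, $T_{H_1}$ and $C_{H_1}$ each have exactly one outgoing edge, to $S_H$, and $B_{H_1}$ has exactly one, to $T_H$. This holds because the distinguished vertices $T,B,C$ have no outgoing edges inside an SPL graph, and after the loop operation these vertices are no longer distinguished, so they never acquire more. The added edges are pure jumps. Hence $L(T_{H_1})\cup L(C_{H_1})\subseteq L(S_H)$ and $L(B_{H_1})\subseteq L(T_H)$. Each pair $(f',f_1')$ is therefore determined by its values on the live sets of the five vertices $S_H,T_H,B_H,C_H,S_{H_1}$. With your per-vertex count this gives $O(|\vars|^{5r})$.

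A smaller point in the correctness part: your claim that a variable alive in both parts of $\oseries{H_1}{H_2}$ must pass through a merged vertex is false, because lifetimes are connected in $G$, not in $H$. Suppose $\oseries{H_1}{H_2}$ is a loop body with $H_1$ being $y\gets x$ and $H_2$ being $x\gets y+1$. Then $x$ is alive at $S_{H_1}$ and $T_{H_2}$ but not at $M$. Likewise, a variable living only on the $H_1$ side and one living only on the $H_2$ side may intersect outside $H$.

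The gluing still works, but for a different reason. Such variables are alive at a distinguished vertex of $H_1$ and of $H_2$ (respectively, of $H$). So they lie in the domains of $f_1'$ and $f_2'$ (respectively $f'$). Pairwise compatibility, together with validity of $f'$, then gives the agreement and absence of conflicts you need.
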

\begin{proof}
	Correctness was argued above. We do a casework for runtime analysis:
	\begin{itemize}
		\item At atomic graphs, we are considering partial assignments $f'$ over variables that are alive at any of the four distinguished vertices. Let $a$ be one of these distinguished vertices. The set $L(a)$ of alive variables at $a$ forms a clique in the interference graph $\mathbb I.$ Thus, any valid $f'$ can assign $f'(v) \neq \perp$ to at most $r$ variables $v$ in $L(a).$ Moreover, no two variables can be assigned to the same register. Given that $|L(a)| \leq |\vars|,$ the total number of possible assignments for variables in $L(a)$ is at most $$\binom{|\vars|}{r} \cdot r! + \binom{|\vars|}{r-1} \cdot (r-1)! + \dots + \binom{|\vars|}{0} \cdot 0! \in O(r \cdot |\vars|^r).$$ Thus, the total number of $f'$ functions is at most $O(r^4 \cdot |\vars|^{4 \cdot r})$ given that we have four distinguished vertices. Our algorithm spends a constant amount of time for each $f',$ simply querying the cost of a single edge.
		\item When $H = \oseries{H_1}{H_2},$ we note that we have $B_H = B_{H_1} = B_{H_2}$ and $C_H = C_{H_1} = C_{H_2}.$ Similarly, we have $T_{H_1} = S_{H_2}.$ Thus, $f', f_1'$ and $f_2'$ need to jointly choose a register assignment for the variables that are alive at one of five vertices: $S_{H_1}, T_{H_1}, T_{H_2}, B$ and $C.$ An argument similar to the previous case shows that there are $O(r^5 \cdot |\vars|^{5 \cdot r})$ such assignments. We also note that $E_{H_1} \cap E_{H_2}$ has $O(1)$ many edges since any such edge must be connecting two distinguished vertices and we have only four such vertices. Thus, the total runtime here is also $O(r^5 \cdot |\vars|^{5 \cdot r}).$
		\item  When $H = \oparal{H_1}{H_2},$ a similar argument applies. In this case, we have $S_H = S_{H_1} = S_{H_2}, T_H = T_{H_1} = T_{H_2}, B_H = B_{H_1} = B_{H_2}$ and $C_H = C_{H_1} = C_{H_2}.$ Thus, we need to look at assignments for live variables at only four different vertices and our runtime is $O(r^4 \cdot |\vars|^{4 \cdot r}).$
		
		\item Finally, when $H = \oloop{H_1},$ we are introducing four new distinguished vertices. So, it seems that we have to consider the live variables at eight vertices in total, i.e.~the distinguished vertices of both $H$ and $H_1.$ However, note that $B_{H_1}$ has only one outgoing edge in our control-flow graph $G$ which goes to $T_H.$ Thus, we have $L(B_{H_1}) \subseteq L(T_H).$ For similar reasons, $L(T_{H_1}) \subseteq L(S_H)$ and $L(C_{H_1}) \subseteq L(S_H).$ Therefore, we only need to consider the program variables that are alive at one of the five vertices $S_H, T_H, B_H, C_H$ and $S_{H_1}.$ An argument similar to the previous cases shows that our runtime is $O(r^5 \cdot |\vars|^{5 \cdot r}).$
	\end{itemize}
	Finally, our algorithm has to process the grammatical decomposition in a bottom-up manner and compute the $\opt{\cdot, \cdot}$ values at every node. We have $O(|G|)$ nodes. Thus, the total worst-case runtime is $O(|G| \cdot r^5 \cdot |\vars|^{5 \cdot r}).$ Following~\cite{DBLP:conf/cc/Krause13} and other works on minimum-cost register allocation, we assume that $r$ is a constant. Thus, our runtime is $O(|G| \cdot |\vars|^{5 \cdot r}).$
\end{proof}

\para{Parallelization} We note that our algorithm is perfectly parallelizable since at every $\slap$ subgraph $H,$ one can compute the $\opt{H, f}$ values for different $f$ functions in parallel. Thus, if we have $k$ threads and $k$ is less than the number of possible partial functions $f,$ then our runtime is reduced to $O\left(\frac{|G| \cdot |\vars|^{5 \cdot r}}{k}\right).$

\subsection{Spill-free Register Allocation} \label{sec:last}

We remark that the works \cite{DBLP:conf/soda/BodlaenderGT98} (SODA 1998) and \cite{DBLP:journals/pacmpl/ConradoGL23} (OOPSLA 2023) provide linear-time algorithms with respect to the size of the CFG for the decision problem of existence of a spill-free register allocation, i.e.~setting $c(e, f) = 0$ if $f$ is valid and allocates all variables to registers meaning that it does not map anything to $\perp$ and $c(e, f) = +\infty$ otherwise, and simply asking whether an assignment with zero total cost is attainable. This is a special case of the problem we considered above. Unlike the general case, works on this special case often do not consider $r$ to be a constant and analyze their runtimes based on both $|G|$ and $r.$ The former work provides an algorithm with a runtime of $O(|G| \cdot r^{2 \cdot t \cdot r + 2\cdot r})$ where $t$ is the treewidth of the control-flow graph. The latter uses a different parameter, namely pathwidth, and obtains a runtime of $O(|G| \cdot p \cdot r^{p \cdot r+r+1}).$
Given that the treewidth of a structured program in languages such as C can be up to 7~\cite{DBLP:journals/dam/KrauseLS20} and the pathwidth is also empirically observed to be no more than 17 in~\cite{DBLP:journals/pacmpl/ConradoGL23}, these approaches provide runtimes of $O(|G| \cdot r^{\textcolor{red}{16 \cdot r}})$ and $O(|G| \cdot r^{\textcolor{red}{18 \cdot r + 1}})$ respectively. Moreover,~\cite{DBLP:journals/pacmpl/ConradoGL23} (Figure~9) observes that the vast majority of real-world programs have a pathwidth of 6 or lower. For these instances, their runtime would be $O(|G| \cdot r^{\textcolor{red}{7\cdot r+1}}).$ While this is applicable to a majority of real-world CFGs, it does not cover all of them. In general, there is no known constant bound on the pathwidth of CFGs and thus~\cite{DBLP:journals/pacmpl/ConradoGL23}'s algorithm has a much higher running time in the theoretical worst case.

We now show that our algorithm above significantly improves the time complexity, i.e.~the dependence on $r,$ for the spill-free register allocation problem, as well. 

\begin{theorem} \label{thm:free}
	The algorithm of Section~\ref{sec:algo} can be directly applied to spill-free register allocation, i.e.~the case where the cost $c(e, f)$ is zero when $f$ does not map anything to $\perp$ and $+\infty$ otherwise, and solves the problem in time $O(|G| \cdot r^{\textcolor{red}{5 \cdot r + 5}}).$
\end{theorem}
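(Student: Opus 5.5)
Correctness is immediate: spill-free allocation is the instance of the minimum-cost problem with $c(e,f)=0$ for valid $f$ avoiding $\perp$ and $c(e,f)=+\infty$ otherwise. This $c$ depends only on the allocation of variables alive at $e$, so it is a legitimate cost function for Theorem~\ref{thm:main}. A zero-cost assignment exists if and only if $\min_f \opt{G,f}=0$. The real content is therefore the refined running time, and I will redo the counting argument from the proof of Theorem~\ref{thm:main} under the extra restriction that no variable is mapped to $\perp$.

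The key step is to bound the number of relevant partial assignments $f'$ at a node of the decomposition. Any $f'$ that maps some live variable to $\perp$ already makes the incident edge costs $+\infty$, so it can be discarded (its table entry is set to $+\infty$ without enumeration). Similarly, any $f'$ giving two simultaneously live variables the same register is invalid. For a single distinguished vertex $a$, $L(a)$ is a clique in $\mathbb I$, so a finite-cost assignment must map $L(a)$ injectively into $[r]$. This forces $|L(a)|\le r$; otherwise no spill-free allocation exists, which we detect in linear time and report. Hence the number of admissible assignments on $L(a)$ is at most $r!/(r-|L(a)|)!\le r^{r}$. Crucially, this is independent of $|\vars|$: we are no longer choosing which subset of $L(a)$ is in registers, since all of it is.

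Next I reuse the structural observations from the proof of Theorem~\ref{thm:main}. In every case (atomic, series, parallel, loop), the joint assignments $f',f_1',f_2'$ range over variables alive at at most five vertices. For series these are $S_{H_1},T_{H_1}=S_{H_2},T_{H_2},B,C$. For parallel there are four. For loop they are $S_H,T_H,B_H,C_H,S_{H_1}$, using $L(B_{H_1})\subseteq L(T_H)$ and $L(T_{H_1}),L(C_{H_1})\subseteq L(S_H)$. A joint assignment is determined by its restriction to each of these five live sets. Taking the product of the per-vertex bounds gives at most $(r^r)^5=r^{5r}$ candidate combinations per node, and compatibility only prunes this set. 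For each combination the work consists of $O(1)$ table lookups plus $O(1)$ cost-oracle queries, because shared edges $E_{H_1}\cap E_{H_2}$ join distinguished vertices and loops add five edges. Checking compatibility touches at most $5r$ variables, which costs an $O(r)$ factor that fits within the slack.

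The step I expect to be most delicate is bookkeeping the polynomial factors so that the bound comes out exactly as $r^{5r+5}$ rather than something like $r^{5r}\cdot\mathrm{poly}(r)$ with an uncontrolled exponent. I would absorb the compatibility checking, hashing or indexing of partial assignments into tables, and per-combination overhead into an $r^5$ factor. Concretely, I would use a slightly crude count of at most $r^{r+1}$ per vertex (choices of at most $r$ registers for up to $r$ live variables, including the size), which yields $r^{5r+5}$ per node. Multiplying by the $O(|G|)$ nodes of the grammatical decomposition, which is obtained in linear time as discussed in Section~\ref{sec:grammar}, gives $O(|G|\cdot r^{5r+5})$.
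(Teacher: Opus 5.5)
Your proposal is correct and follows the paper's proof. The clique argument forces $|L(a)|\le r$ at every vertex (otherwise the answer is ``no''), and rerunning the five-vertex counting from Theorem~\ref{thm:main} with this bound in place of $|\vars|$ gives $O(r^5\cdot r^{5r})$ work per decomposition node, hence $O(|G|\cdot r^{5r+5})$ overall. Your explicit discarding of $\perp$-assignments and your bound of $r!/(r-|L(a)|)!\le r^r$ injective maps per vertex are just a slightly sharper form of the same count: they already give $O(r^{5r+1})$ per node including the $O(r)$ compatibility check, so the cruder $r^{r+1}$ bookkeeping in your last paragraph is unnecessary.
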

\begin{proof}
	The proof is exactly the same as that of Theorem~\ref{thm:main}, with one additional observation as follows: If we have $|L(a)| > r$ for some vertex $a$ of the control-flow graph, i.e.~if there are more than $r$ variables alive at the same time, then the answer to spill-free register allocation is ``no''. This is because the vertices in $L(a)$ form a clique in $\mathbb I.$ Thus, we only have to consider the case where $|L(a)| \leq r$ for every $a.$ Therefore, in the analysis of the proof of Theorem~\ref{thm:main}, when we consider the variables that are alive at $k \leq 5$ vertices, we can be sure that there are at most $k \cdot r$ such variables. Hence, our total runtime is $O(|G| \cdot r^5 \cdot r^{5 \cdot r}) = O(|G| \cdot r^{5 \cdot r + 5}).$
\end{proof}

\para{Further Optimization} There is also a simple optimization which can improve the performance of our algorithm in practice. For the spill-free register allocation problem, we do not have any register preferences. Thus, renaming and permuting registers does not invalidate a valid assignment. Similarly, if an assignment is invalid, renaming registers cannot fix the conflicts. Thus, register assignments that are the same modulo register renaming are the same to us. This means we only need to store one representative from each equivalence class as the canonical representation. This significantly reduces the number of dynamic programming values that our algorithm has to compute. %One natural choice is to view these register assignment functions as tuples of registers, where the indices correspond to variables ordered by some global variable ordering. The representative would be the lexicographically minimal one. We denote this canonical representation of $f$ by $\overline{f}$, and say that $f_1$ and $f_2$ are equivalent up to renaming by $f_1 \simeq f_2$.

\para{Mixing with Other Compiler Optimizations} Our approach for register allocation is robust with respect to other compiler optimizations that do not introduce/remove program variables and thus keep the lifetime intersections between variables intact. This is because we can consider the CFG right before such optimizations as the input to our register allocator. There are also many common optimizations, such as lifetime-optimal speculative partial redundancy elimination (LOSPRE)~\cite{DBLP:conf/scopes/Krause21}, which change the variables but keep the program structured. In such cases, our approach can work on the resulting CFG after these optimizations are applied. However, if the compiler applies an optimization which both introduces new variables and changes the CFG to a non-structured graph, then our approach cannot be combined with it.
 
\section{LOSPRE} \label{sec:lospre}

Redundancy elimination (RE), i.e.~avoiding repeated and unnecessary computations of the same expression, has been a goal of optimizing compilers since their early days. Put simply, if the same expression $e$ is used in several different locations in a program, it might be beneficial to compute $e$ once, store it in a temporary variable, and then use it whenever the program reaches any of the locations that need $e.$ One of the first formalizations of this problem was provided in 1970 as Global Common Subexpression Elimination (GCSE)~\cite{cocke1970GCSE}. Later approaches considered removing redundancies that appear only in a subset of paths of the control-flow graph, leading to Partial Redundancy Elimination (PRE)~\cite{morel1979partial}. An enhancement to PRE, introduced by Lazy Code-Motion (LCM)~\cite{knoop1992lazycodemotion}, focuses on achieving 
lifetime optimality by minimizing the lifetimes of the temporary variables it introduces. This is also helpful for reducing register pressure. Another classical improvement is that of Speculative PRE (SPRE)~\cite{cai2003SPRE,gupta1998profile}, which selects the path for adding computations based on profiling 
information with the goal of maximizing the benefits of PRE. Putting the ideas of LCM and SPRE together leads to Lifetime-Optimal SPRE (LOSPRE), which is currently the most expressive approach to redundancy elimination and subsumes all other methods mentioned above. 

Now with the CFG $G=\{V,E\}$, we formally define the problem as follows:

\begin{itemize}
    \item \emph{Use set.} Consider an expression $e.$ We define the use set $U$ of $e$ as the set of all nodes of the CFG in which the expression $e$ is computed.

    \item \emph{Live set.} Our goal is to precompute the expression $e$ at a few points, save the result in a temporary variable \texttt{temp}, and then use   \texttt{temp} in place of $e$ in every node of $U.$ We denote the lifetime of the variable $\texttt{temp}$ by $L$ and call it our live set. 

    \item \emph{Invalidating set.} We say a node $v$ of the CFG invalidates $e$ if the statement at $v$ changes the value of $e.$ For example, if $e = \texttt{a+b},$ then the statement $\texttt{a = 0}$ invalidates $e.$ We denote the set of all invalidating nodes by $I.$ These nodes play a crucial role in LOSPRE since they force us to update the value saved in $\texttt{temp}$ by recomputing $e.$ We assume that the entry and exit nodes are invalidating since LOSPRE is an intraprocedural analysis that has no information about the program's execution before or after the current function.

    \item \emph{Calculating set.} Given the sets $U, L$ and $I$ above, we have to make sure the value of our temporary variable $\texttt{temp}$ is correct at every node in $U \cup L.$ Thus, for every edge $(x, y) \in E$ of the CFG where $x \not \in L$ and $y \in U \cup L,$ we have to insert a computation $\texttt{temp} = e$ between $x$ and $y.$ Similarly, if $x \in I,$ then the value stored at $\texttt{temp}$ becomes invalid after the execution of $x,$ requiring us to inject the same computation between $x$ and $y.$ Formally, the computation $\texttt{temp} = e$ has to be injected into the following set of edges of the CFG:
$$
C(U, L, I) = \{ (x, y) \in E ~\vert~ x \not\in L \setminus I ~\land~ y \in U \cup L \}.
$$

\end{itemize}

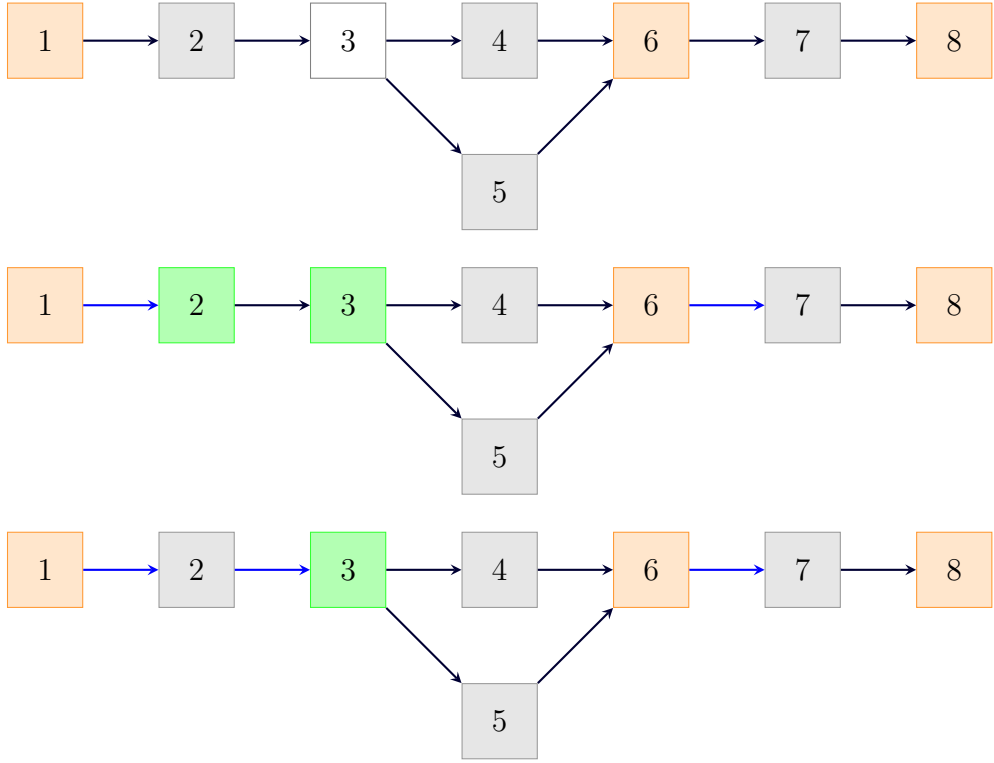
\begin{figure}
	\begin{center}
		\begin{tikzpicture}[scale=0.6]
			\node[redvertex] (1) {$1$};
			\node[grayvertex] (2)  [right =of 1] {$2$};
			\node[whitevertex] (3)  [right =of 2] {$3$};
			\node[grayvertex] (4)  [right =of 3] {$4$};
			\node[grayvertex] (5)  [below =of 4] {$5$};
			\node[redvertex] (6)  [right =of 4] {$6$};
			\node[grayvertex] (7)  [right =of 6] {$7$};
			\node[redvertex] (8)  [right =of 7] {$8$};
			\draw[arrow] (1) -- (2);
			\draw[arrow] (2) -- (3);
			\draw[arrow] (3) -- (4);
			\draw[arrow] (3) -- (5);
			\draw[arrow] (5) -- (6);
			\draw[arrow] (4) -- (6);
			\draw[arrow] (6) -- (7);
			\draw[arrow] (7) -- (8);
		\end{tikzpicture}
	\end{center}
	
	\begin{center}
		\begin{tikzpicture}[scale=0.6]
			\node[redvertex] (1) {$1$};
			\node[bluevertex] (2)  [right =of 1] {$2$};
			\node[bluevertex] (3)  [right =of 2] {$3$};
			\node[grayvertex] (4)  [right =of 3] {$4$};
			\node[grayvertex] (5)  [below =of 4] {$5$};
			\node[redvertex] (6)  [right =of 4] {$6$};
			\node[grayvertex] (7)  [right =of 6] {$7$};
			\node[redvertex] (8)  [right =of 7] {$8$};
			\draw[arrow,blue] (1) -- (2);
			\draw[arrow] (2) -- (3);
			\draw[arrow] (3) -- (4);
			\draw[arrow] (3) -- (5);
			\draw[arrow] (5) -- (6);
			\draw[arrow] (4) -- (6);
			\draw[arrow,blue] (6) -- (7);
			\draw[arrow] (7) -- (8);
		\end{tikzpicture}
	\end{center}
	
	\begin{center}
		\begin{tikzpicture}[scale=0.6]
			\node[redvertex] (1) {$1$};
			\node[grayvertex] (2)  [right =of 1] {$2$};
			\node[bluevertex] (3)  [right =of 2] {$3$};
			\node[grayvertex] (4)  [right =of 3] {$4$};
			\node[grayvertex] (5)  [below =of 4] {$5$};
			\node[redvertex] (6)  [right =of 4] {$6$};
			\node[grayvertex] (7)  [right =of 6] {$7$};
			\node[redvertex] (8)  [right =of 7] {$8$};
			\draw[arrow,blue] (1) -- (2);
			\draw[arrow, blue] (2) -- (3);
			\draw[arrow] (3) -- (4);
			\draw[arrow] (3) -- (5);
			\draw[arrow] (5) -- (6);
			\draw[arrow] (4) -- (6);
			\draw[arrow,blue] (6) -- (7);
			\draw[arrow] (7) -- (8);
		\end{tikzpicture}
	\end{center}
	\caption{An example of LOSPRE. The use set is shown in gray, the invalidating set in orange, and the life set in green. The edges of the calculation set are shown in blue.}
	\label{fig:ex}
\end{figure}

Figure~\ref{fig:ex} shows an example of LOSPRE. The top part of the figure is a CFG in which the use set of an expression $e$ is shown in gray, i.e.~we need the value of $e$ at vertices $U = \{2, 4, 5, 7\}.$ The invalidating set is shown in orange, i.e.~the vertices in $I = \{1, 6, 8\}$ invalidate $e.$ The middle and bottom parts each show one possible optimization. We show the lifetime of our temporary variable in green. 

In the middle part, the temporary variable is alive at $\{2, 3\}.$ Thus, the computation $\texttt{temp} = e$ has to be injected into the edge $(1, 2).$ We can then use $\texttt{temp}$ instead of $e$ in locations $2, 4$ and $5.$ However, we need to recompute $e$ in the edge $(6, 7).$ In this case our computation set is $\{(1, 2), (6, 7)\}.$ The edges in the computation set are shown in blue.

In the bottom part, the temporary variable is alive only at position $3.$ Thus, we first compute $e$ when passing through $(1, 2)$ so that we have its value at $2.$ We then recompute $e$ when going through $(2, 3)$ and save it at a temporary variable $\texttt{temp}.$ This temporary variable is then used in place of $e$ in $4$ and $5.$ This example shows a tradeoff in which fewer repetitions of the computation lead to a longer lifetime for the temporary variable, which increases register pressure and is undesirable for register allocation.

This time, there are two types of costs associated with the process above: (i)~injecting calculations into the edges in $C(U, L, I)$ and (ii)~keeping an extra variable $\texttt{temp}$ at every node in $L.$ These costs are dependent on the goals pursued by the compiler. For example, a compiler aiming to minimize code size will focus on (i). On the other hand, if our goal is to ease register pressure, we would want to minimize (ii). LOSPRE is an expressive framework in which these costs are modeled by two functions
$$
c: E \rightarrow K
$$ 
and 
$$
l: V \rightarrow K.
$$
where $K$ is a totally-ordered set with an addition operator, $c$ is a function that maps each edge to the cost of adding a computation of $e$ in that edge and $l$ is similarly a function that maps each vertex of the CFG to the cost of keeping the temporary variable $\texttt{temp}$ alive at that vertex.

\para{Formal Definition} Given a CFG $G=(V,E)$, a use set $U,$ an invalidating set $I$ and two cost functions $c: E \rightarrow K$ and $l: V \rightarrow K,$ the LOSPRE problem is to find a life set $L$ that minimizes the total cost $$\cost(G, U, I, L, c, l) = \sum_{e\in C(U,L,I)}c(e)+\sum_{v\in L} l(v).$$

\para{Our Algorithm}
We present a linear-time algorithm for LOSPRE using SPL decompositions. The input to our algorithm consists of a closed program $P,$ its control-flow graph $G=(V, E),$ a use set $U \subseteq V,$ an invalidating set $I \subseteq V$ and two cost functions $c: E \rightarrow K$ and $l: V \rightarrow K.$ Our goal is to find a life set $L \subseteq V$ that minimizes
$$\cost(G, U, I, L, c, l) = \sum_{e\in C(U,L,I)}c(e)+\sum_{v\in L} l(v).$$

\noindent\textbf{Step 1 (Initialization).} Our algorithm computes an SPL decomposition of $G = \cfg{P}$ by first parsing $P$ and then applying the homomorphism of the previous section. 

\noindent\textbf{Step 2 (Dynamic Programming).} Our algorithm proceeds with a bottom-up dynamic programming on the SPL decomposition. Note that each node $u$ of the SPL decomposition corresponds to an SPL subgraph $G_u$ of $G$, which is either an atomic SPL graph (when $u$ is a leaf) or obtained by applying one of the SPL operations to the graphs corresponding to the children of $u.$ See Figure~\ref{fig:decompo}. Let $\varGamma_u = \{S_u, T_u, B_u, C_u\}$ be the set of special vertices of $G_u.$ For every $X \subseteq \varGamma_u,$ we define a dynamic programming variable $\DP[u, X].$ Our goal is to compute this dynamic programming value such that
$$
\DP[u, X] = \min_{L \subseteq V_u ~\land~ L \cap \varGamma_u = X} \cost(G_u, U, I, L, c, l).
$$
Intuitively, we are considering a subproblem of the original LOSPRE in which the graph is limited to $G_u.$ Moreover, we only consider those solutions (life sets) $L$ for which $L \cap \varGamma_u = X.$ The value in $\DP[u, X]$ should then give us the minimum cost among all such solutions. Below, we present how our algorithm computes $\DP[u, X]$ for every vertex $u$ of the decomposition based on the $\DP[\cdot, \cdot]$ values at its children:
\begin{itemize}
	\item \emph{Atomic Nodes}: If $G_u$ is an atomic SPL graph, then the only vertices in $G_u$ are the four special vertices. Therefore, we must have $L=X.$ Our algorithm computes each $\DP[u, X]$ as:
	 $$\DP[u, X] = \cost(G_u, U, I, X, c, l) = \sum_{e\in C(U,X,I) \cap G_u} c(e)+\sum_{v\in X} l(v).$$
	 
	 \item \emph{Series Nodes}: Suppose $G_u = \oseries{G_v}{G_w}$ where $v$ and $w$ are the children of $u$ in the SPL decomposition. Let $X \subseteq \varGamma_u$ and $X_v \subseteq \varGamma_v$ be subsets of special vertices of $G_u$ and $G_v,$ respectively. We say that $X$ and $X_v$ are \emph{compatible} and write $X \compatible X_v$
	 if the following conditions are satisfied:
	 \begin{itemize}
	 \item 	$S_v \in X_v \Leftrightarrow S_u \in X;$
	 \item	$B_v \in X_v \Leftrightarrow B_u \in X;$
	 \item	$C_v \in X_v \Leftrightarrow C_u \in X.$
	 \end{itemize}
 	Intuitively, compatibility means that the subsets $X$ and $X_v$ make the same decisions about including vertices in the life set $L.$ Since $S_u = S_v,$ they should either both include it or both exclude it. Similarly, $B_u$ is obtained by merging $B_v$ and $B_w$. Therefore, the decisions made for $B_u$ and $B_v$ must match. The same applies to $C_u$ which is a merger of $C_v$ and $C_w.$ 
 	
 	Now consider $X_w \subseteq \Gamma_w.$ We say that $X_w$ and $X$ are compatible and write $X \compatible X_w$ if the following conditions are satisfied:
 	\begin{itemize}
 		\item 	$T_w \in X_w \Leftrightarrow T_u \in X;$
 		\item	$B_w \in X_w \Leftrightarrow B_u \in X;$
 		\item	$C_w \in X_w \Leftrightarrow C_u \in X.$
 	\end{itemize}
 	The intuition is the same as the previous case, except that we now have $T_u = T_w.$ Finally, we say that $X_v$ and $X_w$ are compatible and write $X_v \compatible X_w$ if 
 	\begin{itemize}
 		\item $T_v \in X_v \Leftrightarrow S_w \in X_w.$
 	\end{itemize}
 	This is because $T_v$ and $S_w$ are the same vertex of the CFG.
 	
 	In this step, our algorithm sets
 	\begin{equation*} 
 		\resizebox{\linewidth}{!}{
 		$\displaystyle \DP[u, X] = \min_{\makecell[c]{X \compatible X_v \\ X \compatible X_w \\ X_v \compatible X_w}} \DP[v, X_v] + \DP[w, X_w] - [T_v \in X_v] \cdot l(T_v) - [B_v \in X_v] \cdot l(B_v) - [C_v \in X_v] \cdot l(C_v).$}
 	\end{equation*}
 	This is because every edge in $G_u$ appears in either $G_v$ or $G_w$ but not both. Thus, the cost of the edges would simply be the sum of their costs in the two subgraphs. However, when it comes to vertices, $T_v$ and $S_w$ are merged, as are $B_v$ and $B_w,$ and $C_v$ and $C_w.$ Hence, we have to make sure we do not double count the cost of liveness for these vertices. Since this cost is counted in both $\DP$ values at the children, we should subtract it.
 	
 	\item \emph{Parallel Nodes:} We can handle parallel nodes in the same manner as series nodes, i.e.~finding compatible masks at both children and ensuring that there is no double-counting of the costs of vertices. To be more precise, let $G_u = \oparal{G_v}{G_w}.$ The compatibility conditions we have to check are as follows:
 \begin{equation*}
 	\resizebox{\linewidth}{!}{
 	$\begin{matrix}
 		X \compatible X_v \Leftrightarrow \\ \left( S_u \in X \Leftrightarrow S_v \in X_v ~\land~ T_u \in X \Leftrightarrow T_v \in X_v ~\land~ B_u \in X \Leftrightarrow B_v \in X_v ~\land~ C_u \in X \Leftrightarrow C_v \in X_v \right);\\
 		X \compatible X_w \Leftrightarrow \\ \left( S_u \in X \Leftrightarrow S_w \in X_w ~\land~ T_u \in X \Leftrightarrow T_w \in X_w ~\land~ B_u \in X \Leftrightarrow B_w \in X_w ~\land~ C_u \in X \Leftrightarrow C_w \in X_w \right);\\
 		X_v \compatible X_w \Leftrightarrow \\ \left( S_v \in X_v \Leftrightarrow S_w \in X_w ~\land~ T_v \in X_v \Leftrightarrow T_w \in X_w ~\land~ B_v \in X_v \Leftrightarrow B_w \in X_w ~\land~ C_v \in X_v \Leftrightarrow C_w \in X_w \right).
 	\end{matrix}$}
\end{equation*}
 	With the same argument as in the previous case, our algorithm sets
 \begin{equation*}\resizebox{\linewidth}{!}{$\displaystyle \DP[u, X] = \min_{\makecell[c]{X \compatible X_v \\ X \compatible X_w \\ X_v \compatible X_w}} \DP[v, X_v] + \DP[w, X_w] - [S_v \in X_v] \cdot l(S_v) -  [T_v \in X_v] \cdot l(T_v) - [B_v \in X_v] \cdot l(B_v) - [C_v \in X_v] \cdot l(C_v).$}
 	\end{equation*}
 	\item \emph{Loop Nodes:} Finally, we should handle the case where $G_u = \oloop{G_v}.$ This case is quite simple. By construction, in comparison to $G_v,$ the graph $G_u$ has four new vertices $$V_{\text{new}}=\{S_u, T_u, B_u, C_u\}$$ and five new edges $$E_{\text{new}} = \{ (S_u, S_v), (S_u, T_u), (T_v, S_u), (C_v, S_u), (B_v, T_u) \}.$$ The two graphs $G_u$ and $G_v$ do not share any special vertices, i.e.~$\varGamma_u \cap \varGamma_v = \emptyset.$ Moreover, for every edge $(x, y) \in E_{\text{new}}$ we can decide whether $(x, y)$ is in the calculation set solely based on $X$ and $X_v.$ This is because $x, y \in X \cup X_v.$ More specifically, $(x, y)$ is in the calculation set if and only if
 	$$
 	\varphi(X, X_v, x, y) := [x \not\in X \cup X_v \setminus I ~\land~ y \in U \cup X \cup X_v]
 	$$
 	
 	Thus, our algorithm sets:
 	$$
 	\DP[u, X] = \sum_{x \in V_{\text{new}} \cap X} l(x) + \min_{X_v \subseteq \varGamma_v} \DP[v, X_v] + \sum_{(x, y) \in E_{\text{new}}} 	\varphi(X, X_v, x, y) \cdot c(x, y).
 	$$
\end{itemize}

\noindent\textbf{Step 3 (Computing the Final Answer).} Let $r$ be the root of the SPL decomposition. By definition, we have $G_r = G.$ The algorithm outputs
$
	\min_{X \subseteq \varGamma_r} \DP[r, X]
$
as the minimum possible cost for the given LOSPRE input. This is because $G_r$ is the entire CFG $G$ and any solution $L$ will conform to exactly one of the different possible values of $X$ at $r.$ As is standard in dynamic programming approaches, one can reconstruct the optimal live set $L$ that leads to this minimal cost by retracing the steps of the algorithm and remembering which choices led to the optimal value at each step.

\begin{figure}
    \centering
\begin{tikzpicture}[scale=0.4]
  % Root
  \node[rect] (root) {$\oseries{}{}$};

  % Level 1
  \node[rect] (1-6) [below=of root] {$\oseries{}{}$};
  \node[rect] (6-8) [right=of 1-6]  {$\oseries{}{}$};
  \draw[thick] (root) -- (1-6);
  \draw[thick] (root) -- (6-8);

  % Level 2 (left subtree)
  \node[rect] (1-3) [below=of 1-6] {$\oseries{}{}$};
  \node[rect] (3--6) [right=of 1-3] {$\oparal{}{}$};
  \draw[thick] (1-6) -- (1-3);
  \draw[thick] (1-6) -- (3--6);

  % Level 2 (right subtree leaves)
  \node[rect] (6-7) [right=of 3--6] {$[6,7]$};
  \node[rect] (7-8) [right=of 6-7] {$[7,8]$};
  \draw[thick] (6-8) -- (6-7);
  \draw[thick] (6-8) -- (7-8);

  % Level 3 under [1,2,3]
  \node[rect] (1-2) [below=of 1-3] {$[1,2]$};
  \node[rect] (2-3) [right=of 1-2]  {$[2,3]$};
  \draw[thick] (1-3) -- (1-2);
  \draw[thick] (1-3) -- (2-3);

  % Level 3 under [3--6]
  \node[rect] (316) [right=of 2-3] {$\oseries{}{}$};
  \node[rect] (326) [right=of 316] {$\oseries{}{}$};
  \draw[thick] (3--6) -- (316);
  \draw[thick] (3--6) -- (326);

  % Level 4 under [3,4,6]
  \node[rect] (3-4) [below=of 316] {$[3,4]$};
  \node[rect] (4-6) [right=of 3-4] {$[4,6]$};
  \draw[thick] (316) -- (3-4);
  \draw[thick] (316) -- (4-6);

  % Level 4 under [3,5,6]
  \node[rect] (3-5) [right=of 4-6] {$[3,5]$};
  \node[rect] (5-6) [right=of 3-5] {$[5,6]$};
  \draw[thick] (326) -- (3-5);
  \draw[thick] (326) -- (5-6);
\end{tikzpicture}
    \caption{The SPL Decomposition of the Example in Figure~\ref{fig:ex}}
    \label{fig:splex}
\end{figure}
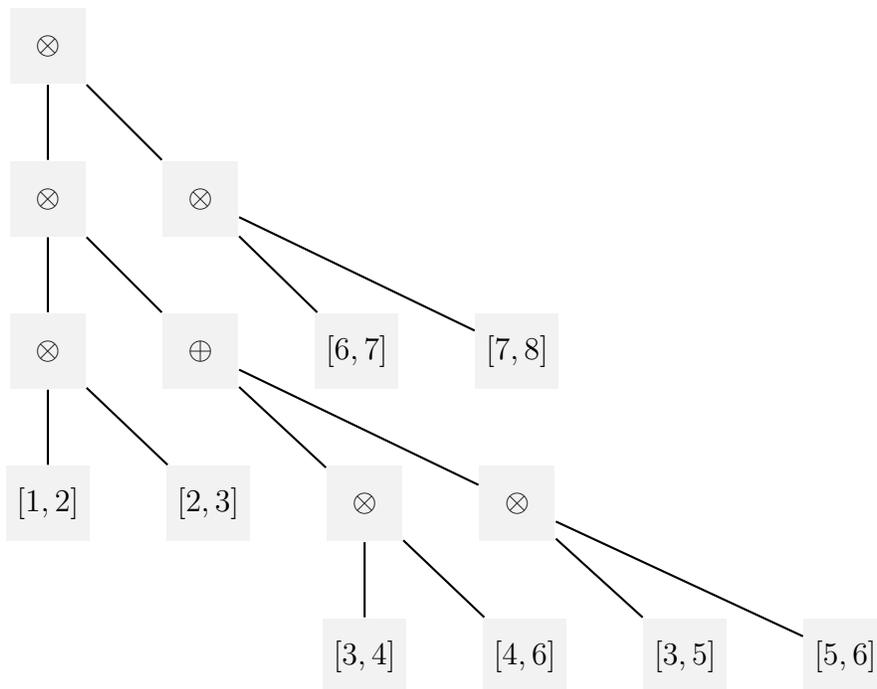

\noindent\textbf{Example.}  
We now illustrate in detail how our algorithm processes the CFG from
Figure~\ref{fig:ex}.  
For simplicity, we again assume a uniform edge-insertion cost $c(e)=1$
for all edges and a uniform liveness cost $l(v)=0.1$ for all vertices.
These uniform costs allow us to highlight the structural behavior of the DP
without obscuring the example with constant factors. Recall the relevant sets for the example:
\[
U=\{2,4,5,7\},\qquad I=\{1,6,8\}.
\]
Figure~\ref{fig:splex} shows an SPL decomposition of the CFG.  
Because the example contains no loops, the entire decomposition consists only of  
\emph{atomic}, \emph{series}, and \emph{parallel} nodes.  
For each node $u$ in this tree, we compute the DP table $\DP[u,X]$ for every
subset $X\subseteq \Gamma_u$ of its special vertices.

\medskip
\noindent\emph{Leaf nodes.}  
Each leaf of the decomposition corresponds to a simple two-vertex fragment
$[i,j]$ encoding a single CFG edge.  
Since the example has no loop blocks, each atomic subgraph has exactly two
special vertices $S_u=i$ and $T_u=j$; the break/continue vertices are irrelevant
in this non-loop setting and can be ignored.
Because $\Gamma_u=\{i,j\}$, we evaluate all four subsets
\[
X\in\bigl\{\emptyset,\,\{i\},\,\{j\},\,\{i,j\}\bigr\}.
\]
For each subset we evaluate
\[
\DP[u,X]
   \;=\;\sum_{e\in C(U,X,I)\cap G_u} c(e)
     \;+\;\sum_{v\in X} l(v).
\]

At a leaf $[i,j]$, the only edge is $(i,j)$, so the only question is whether this
edge belongs to the calculation set.  
This is determined by whether $i\notin X\setminus I$ and $j\in U\cup X$.  
If the update must be injected at $(i,j)$ then we add cost $1$;  
if the vertex $i$ or $j$ is included in $X$, we add the appropriate liveness
cost.

As a concrete example, consider the fragment $[3,4]$.  
Since $4\in U$, any computation of the expression must be correct at node~4.
Thus, unless $3\in X$, the edge $(3,4)$ must be present in the calculation set
and contributes cost~$1$.  
We obtain:
\[
\DP([3,4],\emptyset)=1,\qquad
\DP([3,4],\{3\})=0.1,\]
\[
\DP([3,4],\{4\})=1.1,\qquad
\DP([3,4],\{3,4\})=0.2.
\]

Carrying this out for all leaves produces a full table of $4$ entries per leaf.
Even though each leaf is a tiny graph, the masks on its interface (the special
vertices) are essential because they allow the DP to distinguish when the value
of $e$ must already be live at $S_u$ or $T_u$, which in turn influences whether
recomputation is required.

\medskip
\noindent\emph{Series nodes.}  
A series node corresponds to concatenating two SPL subgraphs so that the
terminate vertex of the left child coincides with the start vertex of the right
child.  
Let us examine the series node combining $[1,2]$ and $[2,3]$.  
Here, the merged vertex is
\[
T_{[1,2]} = S_{[2,3]} = 2,
\]
so the interface of the parent node is
\[
\Gamma_u = \{1,3,B_u,C_u\}.
\]
Again, $B_u$ and $C_u$ play no role, but are part of the uniform interface
structure.

For each possible $X\subseteq\Gamma_u$, the DP considers all masks
$X_v\subseteq\Gamma_v$ and $X_w\subseteq\Gamma_w$ from the two children that
respect the compatibility constraints.  
Compatibility enforces that the merged vertex $2$ is consistently included or
excluded across all three masks.  
Thus, even if the parent mask $X$ does not explicitly mention vertex~2,
every compatible pair $(X_v,X_w)$ implicitly accounts for whether the merged
vertex is live at child interfaces.

Let us illustrate this with a detailed evaluation.  
Suppose we compute $\DP([1,3],X)$ for $X=\{3\}$, meaning that the temporary value
is required to be live at vertex~3 but not at~1.  
To satisfy this, the DP enumerates compatible pairs $(X_v,X_w)$ for the children
$[1,2]$ and $[2,3]$.

Concretely, two such compatible pairs arise:

\begin{itemize}
  \item $X_v=\{2\}$ and $X_w=\{2,3\}$.  
        This choice ensures that the merged vertex~2 is live in both children.
        The combined cost must subtract the double-counted liveness of vertex~2.
  \item $X_v=\emptyset$ and $X_w=\{3\}$.  
        This corresponds to the case where vertex~2 is not required to be live.
\end{itemize}

Evaluating both pairs gives:
\[
\begin{small}
\begin{aligned}
\DP([1,3],\{3\})
  &= \min\Bigl\{
       \DP([1,2],\{2\}) + \DP([2,3],\{2,3\}) - l(2),
       \;\DP([1,2],\emptyset) + \DP([2,3],\{3\})
     \Bigr\} \\
  &= \min\{1.1 + 0.2 - 0.1,\;\; 1 + 1.1\}
   = 1.2.
\end{aligned}
\end{small}
\]

In a full implementation, the DP would evaluate all $2^{|\Gamma_u|}=4$ masks at
this node, but the example above illustrates how compatibility propagates
information about intermediate liveness and how vertex liveness costs are
deduplicated.

\medskip
\noindent\emph{Parallel nodes.}  
Parallel nodes model branching in the CFG.  
In the example, the two subgraphs $[3,4,6]$ and $[3,5,6]$ form parallel
branches that rejoin at vertex~6.  
The special vertices for the combined node are therefore $\{3,6\}$, and any
mask must agree on which of these vertices are live across both children.

For instance, consider $X=\{3\}$.  
The DP simply adds the values of the two child DPs but subtracts the liveness
cost of vertex~3 if it was counted twice:
\[
\DP([3,4,5,6],\{3\})
  = \DP([3,4,6],\{3\})
    + \DP([3,5,6],\{3\})
    - l(3).
\]
Similar reasoning applies to the other masks:
\[
X\in\{\emptyset,\{3\},\{6\},\{3,6\}\}.
\]
Each of these requires consistent inclusion/exclusion across both children and a
single correction term for double-counted liveness.

Intuitively, series nodes propagate information \emph{along} control flow, while
parallel nodes propagate information \emph{across} branches, ensuring that
branch-sensitive costs are treated correctly.

\medskip
\noindent\emph{Propagation toward the root.}  
Once all internal nodes have been processed, the DP reaches the root
representing the full CFG.  
The root has special vertices $\Gamma_r=\{1,8\}$, marking the entry and exit of
the function.  
We evaluate all four masks:
\[
X\in\bigl\{\emptyset,\{1\},\{8\},\{1,8\}\bigr\}.
\]

The minimum is attained at $X=\emptyset$ with value
\[
\DP(r,\emptyset)=2.2.
\]
This states that the temporary value is not needed at either the entry or exit
vertex, and the minimal number of inserted computations (plus liveness costs)
required to satisfy all uses is $2.2$ under our cost model.

Tracing back the choices of compatible masks at each decomposition step reveals
the corresponding optimal life set:
\[
L=\{2,3\}.
\]
Thus, the temporary variable must be live only at vertices $2$ and $3$ in order
to achieve the minimum cost.  
Every other node either recomputes the expression or does not require its
value, exactly as determined by the DP.

This expanded walkthrough demonstrates how the SPL decomposition guides the DP
by exposing small, structured interfaces and ensuring that global optimality can
be achieved through local compatibility constraints.

\begin{theorem} Given a LOSPRE instance consisting of a closed structured program $P,$ its control-flow graph $G$ with $n$ vertices, a use set $U,$ an invalidating set $I$ and two cost functions $c: E \rightarrow K$ and $l: V \rightarrow K,$ the algorithm above solves the LOSPRE problem in $O(n)$ and outputs 
	$$
	\min_{L} \cost(G, U, I, L, c, l) ~~~~~~ \text{ and } ~~~~~~
	\argmin_{L} \cost(G, U, I, L, c, l).
	$$
\end{theorem}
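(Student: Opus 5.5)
The plan is to prove two things separately: that the dynamic program is correct, and that it runs in linear time. Correctness is an induction over the SPL decomposition, showing that every computed $\DP[u,X]$ equals the intended value $\min_{L \subseteq V_u,\, L\cap\varGamma_u = X}\cost(G_u,U,I,L,c,l)$. The running time then comes from bounding the size of the decomposition and the work done at each node. The $\argmin$ is recovered in the usual way, by storing at every node the minimizing child masks and retracing them top-down from the root's best $X$.

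\textbf{Correctness.} The main tool is a decomposition lemma for the cost. For $u$ with children $v,w$, any life set $L\subseteq V_u$ restricts to $L_v = L\cap V_v$ and $L_w = L\cap V_w$, whose interface masks are pairwise compatible. Conversely, compatible masks together with arbitrary interior choices in $G_v$ and $G_w$ glue into a single $L$, since the two children share only interface vertices. I would then check two facts about the cost.
\begin{itemize}
\item \emph{Edge costs add.} Whether an edge $(x,y)$ lies in $C(U,L,I)$ depends only on whether $x \in L$ and whether $y \in L$. Both endpoints of an edge of $G_v$ lie in $V_v$, so membership is decided by $L_v$.
\item \emph{Vertex costs add up to the shared vertices.} The only vertices counted twice are the merged ones: $T_v{=}S_w$, $B$ and $C$ in the series case, and all four interface vertices in the parallel case. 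These are exactly the terms subtracted in the recurrences.
\end{itemize}
The edge sets need care. For series nodes the edge sets are disjoint. For parallel nodes, duplicate edges between merged special vertices, such as $(S,B)$, may arise and would have to be subtracted. The recurrence as stated only subtracts vertex costs, so I would either argue that such duplicated edges are treated as multi-edges in $C(U,L,I)$, or add a correction over $E_v\cap E_w$ as in the register allocation recurrence. Flagging and fixing this is part of the plan.

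\textbf{Remaining cases of the induction.} Loop nodes follow because $\varGamma_u\cap V_v=\emptyset$, and each new edge has both endpoints in $\varGamma_u\cup\varGamma_v$, so $\varphi$ correctly decides whether it is in the calculation set. Atomic nodes are immediate. Finally, since every $L\subseteq V$ has a unique mask at the root, the root minimum equals the global optimum.

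\textbf{Running time.} The decomposition is obtained in $O(n)$ by parsing, as argued in Section~\ref{sec:grammar}. Every node of the decomposition contributes at least one edge or vertex, and the CFG of a structured program has out-degree at most two, so $|E|=O(n)$ and the number of decomposition nodes is $O(n)$. At each node there are at most $2^4$ masks $X$. The minimization ranges over at most $2^4\cdot 2^4$ pairs $(X_v,X_w)$, and each candidate is evaluated in $O(1)$ operations in $K$. Membership in $U$ and $I$ is precomputed as bit flags. The total is therefore $O(n)$.

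I expect the main obstacle to be the bookkeeping in the gluing lemma. It must ensure that $\cost$ restricted to $G_u$ decomposes exactly, with the right inclusion–exclusion on shared vertices and shared edges. It must also confirm that the calculation-set condition for an edge never depends on vertices outside that edge's own child subgraph. Everything else is routine.
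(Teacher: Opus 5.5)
Your correctness argument is the paper's argument made explicit. The paper just says correctness was argued when the recurrences were presented. Your gluing lemma is exactly that argument: children share only interface vertices, membership of an edge in $C(U,L,I)$ depends only on its endpoints, and the merged vertices' $l$-costs are subtracted. Brute-forcing the $2^4\cdot 2^4$ child-mask pairs is a cruder form of the paper's remark that each $X$ leaves at most two compatible $X_v$ and then a unique $X_w$. Your flag on parallel nodes is correct and points to a real imprecision in the paper. The operation $\oparal{}{}$ merges duplicate edges such as $(S,T)$, $(S,B)$ and $(S,C)$, so the stated LOSPRE parallel recurrence counts $c(e)$ twice for every $e\in E_v\cap E_w$ in the calculation set. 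The register-allocation recurrence subtracts exactly this term; the LOSPRE one does not. Either of your fixes repairs it. Your claim that series edge sets are disjoint is true but needs its reason: $T$, $B$ and $C$ have no outgoing edges in any SPL graph, so no edge of $G_v$ joins two merged vertices.

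The step that fails is your bound on the size of the decomposition. SPL graphs do not have out-degree at most two, because parallel composition merges start vertices. Combining $k$ copies of $\oseries{A_\epsilon}{A_\epsilon}$ with $\oplus$ gives $S$ out-degree $k$. Worse, a parallel node need not contribute anything new. Combining $k$ copies of $A_\epsilon$ with $\oplus$ (a chain of \texttt{else if} branches with empty bodies) gives a graph with four vertices and, after merging, a single edge. Its decomposition still has $2k-1$ nodes, and the DP visits all of them. Under your multi-edge fix it instead has $k$ edges on four vertices, so $|E|=O(n)$ fails too. Hence no argument from the vertex or edge count of $G$ can bound the number of decomposition nodes by $O(|V|)$. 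What does hold is that the decomposition is the image under $\cfg{\cdot}$ of the parse tree of $P$. It therefore has $O(|P|)$ nodes and is computed in $O(|P|)$ time; the paper relies on this while only asserting the $O(n)$ size. Replace your out-degree argument with this parse-tree argument and read $n$ as the size of $P$ (equivalently, of the decomposition). With that reading, your $O(1)$-per-node analysis completes the proof.
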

\begin{proof}
	Correctness is already argued above. Thus, we focus on the runtime analysis. The SPL decomposition has $O(n)$ vertices and can be computed in $O(n)$ . At each vertex $u$ of the decomposition, we have $2^4 = 16 = O(1)$ different possible values for $X.$ The computations in atomic node are over graphs with only four vertices and thus take $O(1)$ time. In series and parallel node, we have at most two compatible $X_v$'s for each $X.$ This is because inclusion or exclusion of the vertices $S_v, B_v$ and $C_v$ in $X_v$ is uniquely determined by $X$ and only $T_v$ remains to be chosen. Similarly, for every fixed $X, X_v,$ there is a unique $X_w.$ Thus, computing each $\DP[u, X]$ in this step takes $O(1)$ time. In loop node, every $X$ induces a unique $X_v$ and a unique $X_w.$ Hence, this step takes $O(1)$ time to compute each $\DP[u, x]$ value. In step 3, we try $2^4 = O(1)$ different $X_v$'s for each $X.$ Thus, the total runtime of Step 2 is $O(n).$ Finally, Step 3 takes the maximum of $2^4 = O(1)$ values.
\end{proof}

\section{Experimental Results} \label{sec:exp}

In this section, we provide extensive experimental results comparing our algorithms for spill-free register allocation and LOSPRE with previous approaches based on treewidth~\cite{DBLP:conf/cc/Krause13,DBLP:conf/scopes/Krause21}.

\para{Implementation} We implemented our approaches in \texttt{C++} and integrated them with the Small Device C Compiler (SDCC)~\cite{sdcc2}. SDCC already includes a heavily optimized variant of the algorithms from~\cite{THORUP1998159,DBLP:journals/dam/KrauseLS20, DBLP:conf/cc/Krause13} for finding tree decompositions and the treewidth-based algorithms for the two compiler optimization tasks. Despite our approach being perfectly parallelizable, we did not use parallelization in our experiments in order to provide a fair comparison with the available implementations of previous methods, which are not parallel. 

\para{Machine}  The results were obtained on a machine running Oracle Linux (ARM 64-bit), equipped with 1 core CPU of Apple M2 and 4GB of RAM.

\para{Benchmarks} We followed the setup described in~\cite{DBLP:journals/pacmpl/ConradoGL23}. We used the SDCC regression test suite for HC08 as our benchmark set. These benchmarks consist of embedded programs that are designed to run on systems with limited resources, making compiler optimization a critical performance bottleneck for them. The functions within these benchmarks have control flow graphs (CFGs) ranging from 1 to 800 vertices (lines of code), with an average size of 15.7 vertices. Figure~\ref{fig:cfgSize} presents a histogram of function sizes. We run register allocation once for each test case and LOSPRE once for each expression, so LOSPRE would have more instances.

\begin{figure}
\centering
	\includegraphics[width=0.8\linewidth]{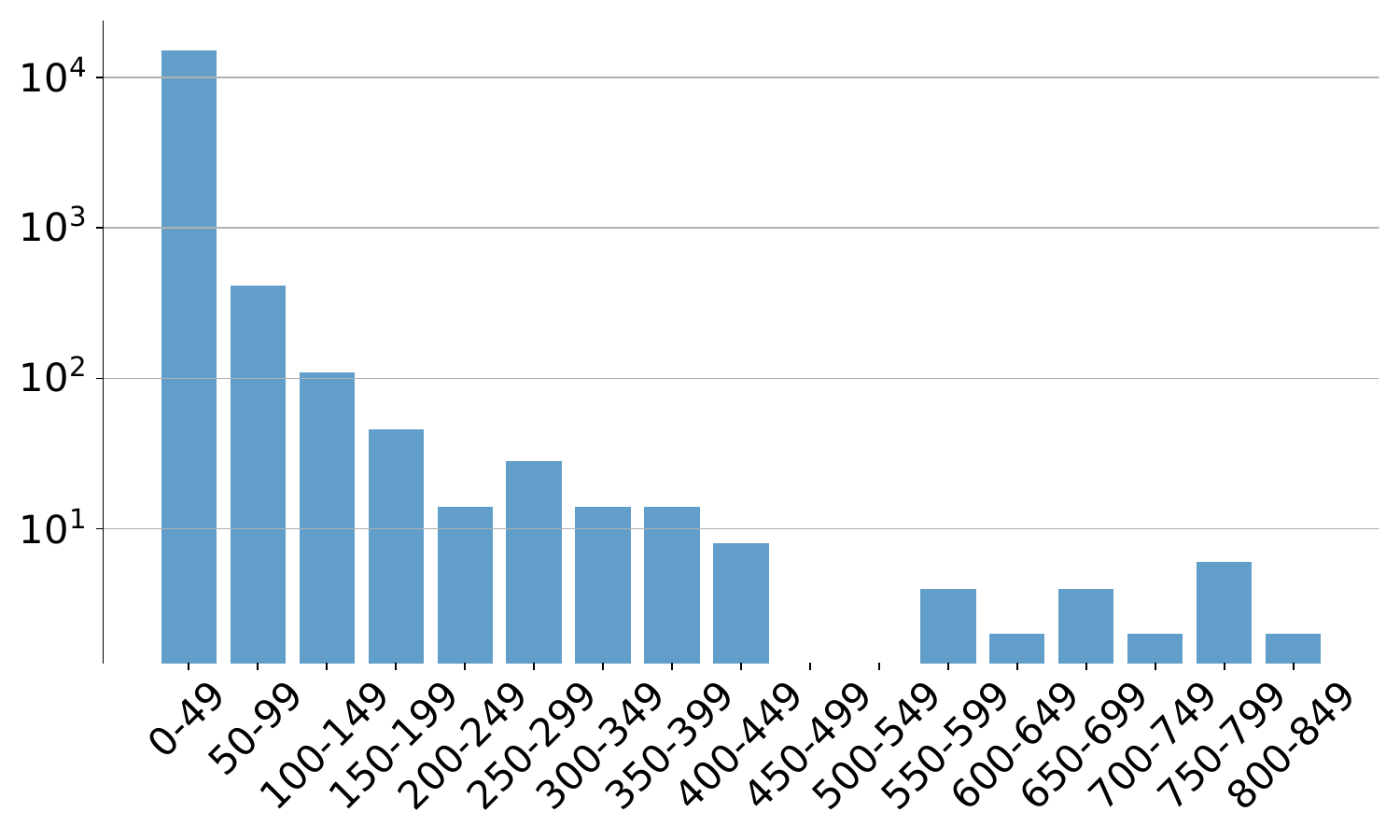}
	\caption{Histogram of the number of CFG vertices (lines of code) in each benchmark. The $x$ axis is the CFG size and the $y$ axis is the number of instances. The $y$ axis is in logarithmic scale.}
	\label{fig:cfgSize}
\end{figure}

\begin{figure}
\centering
	\includegraphics[width=0.6\linewidth]{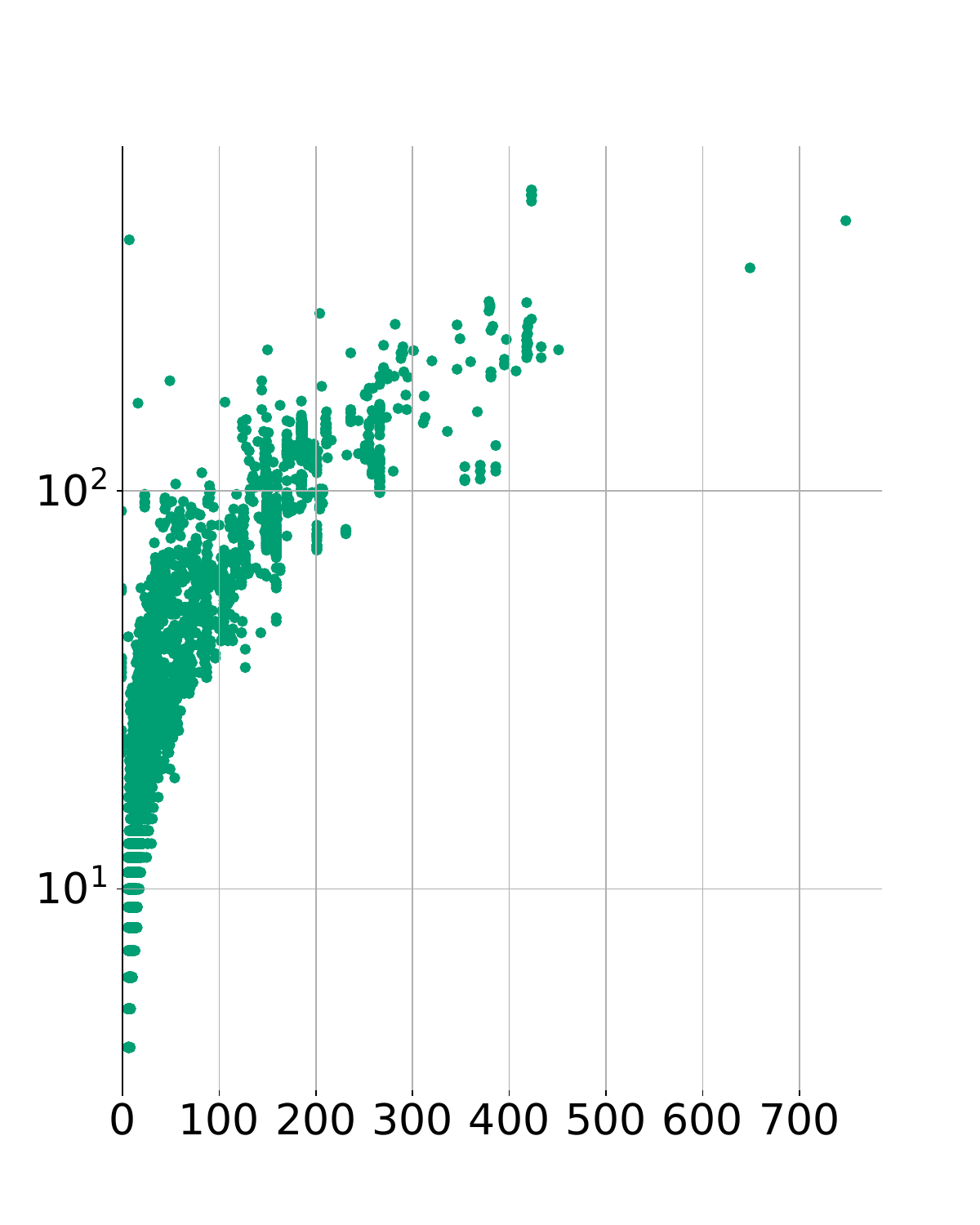}
	\caption{The runtime needed for computing grammatical decompositions of CFGs by parsing the programs. Each dot corresponds to one instance. The $x$ axis is the size of the CFG and the $y$ axis is the runtime in microseconds. The $y$ axis is in logarithmic scale.}
	\label{fig:spl-decom-time}
\end{figure}

\begin{figure}
\centering
	\includegraphics[width=0.8\linewidth]{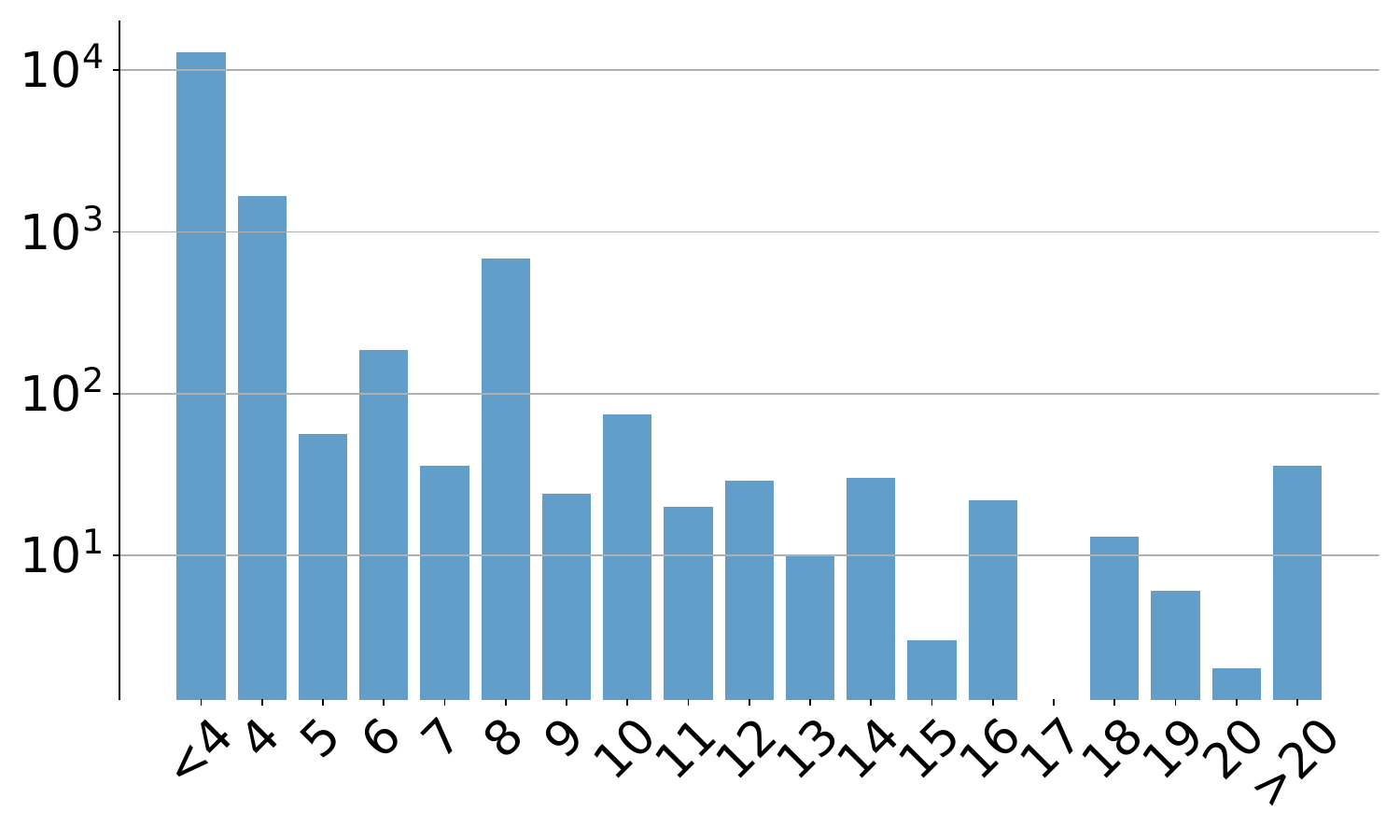}
	
	\caption{Histogram of the minimum number of registers required for spill-free allocation. The $x$ axis is number of registers and the $y$ axis is the number of instances requiring that many registers. The $y$ axis is in logarithmic scale.}
	\label{fig:register-frequency}
\end{figure}

\para{Runtimes for Computing the Grammatical Decomposition} All three compiler optimization tasks rely on grammatical decompositions of control flow graphs (CFGs) as SPL graphs. As mentioned above, such decompositions can be computed in linear time from the original CFG. Figure~\ref{fig:spl-decom-time} illustrates the runtime required for computing grammatical decompositions for each of our benchmarks. The average runtime was 13.8 microseconds, with a maximum of 570 microseconds. Therefore, grammatical decompositions can be computed extremely efficiently, and the time spent obtaining them does not significantly contribute to the total compile time.

\begin{figure*}

	\subfloat[Register Allocation]{
		\includegraphics[clip,width=.5\textwidth]{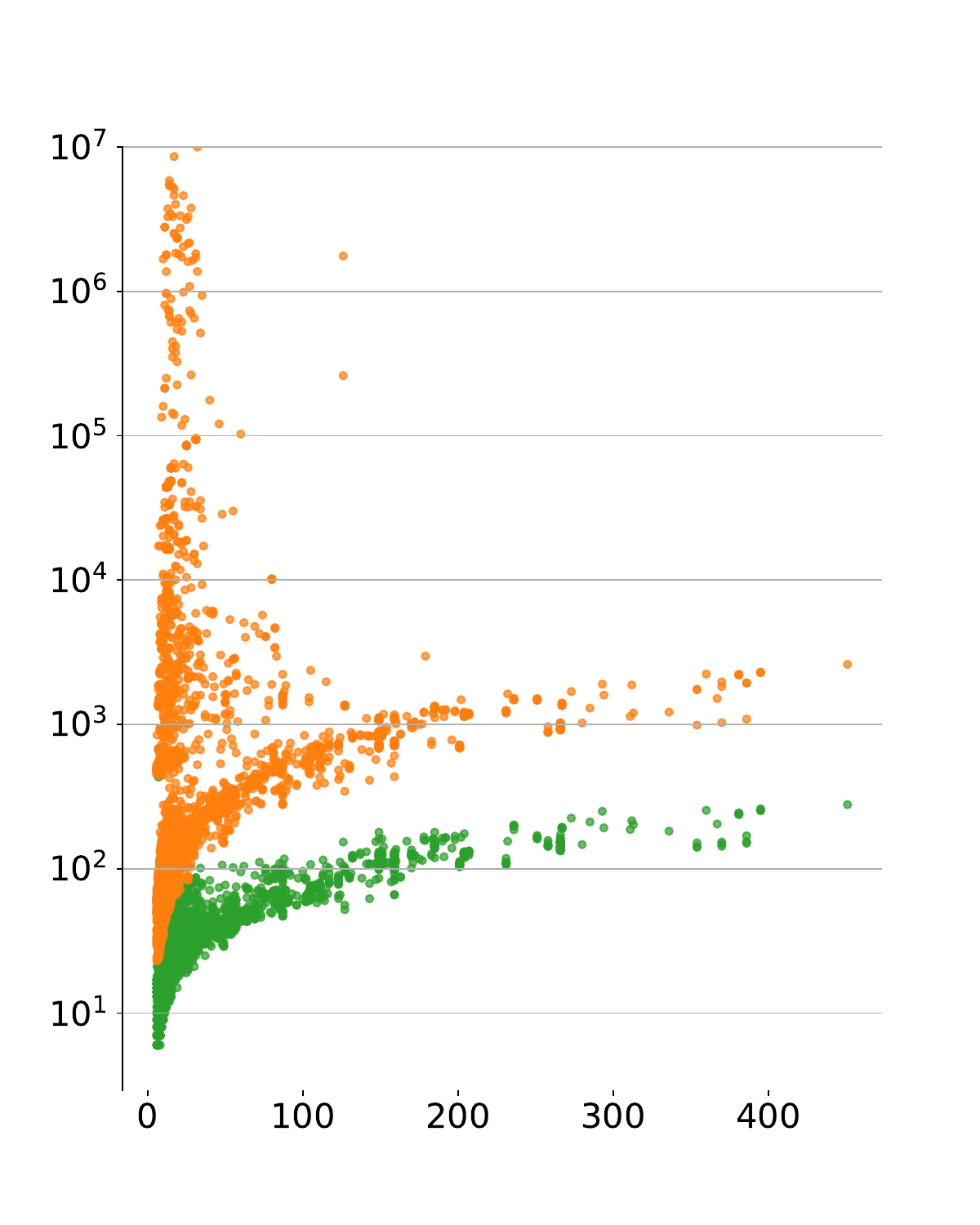} % Increased width to 0.45
	}%
	\subfloat[LOSPRE]{
		\includegraphics[clip,width=.5\textwidth]{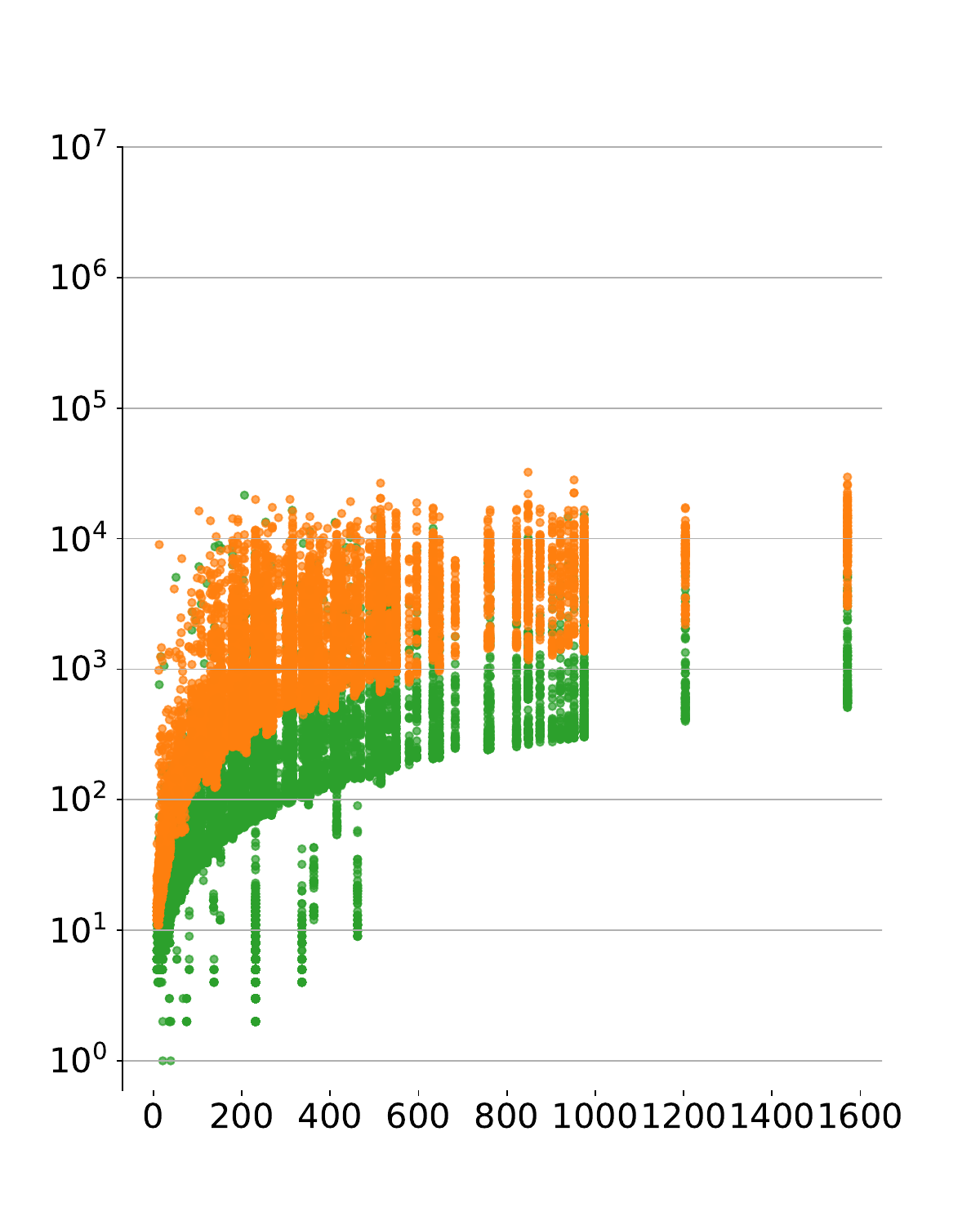} % Increased width to 0.45
	}%
	\caption{Runtime comparison of the treewidth-based algorithm (orange) and our approach (green) for (a) Register Allocation, (b) LOSPRE. The $x$ axis represents the number of vertices in the CFG, and the $y$ axis represents the time in (a,b) microseconds.}
	\label{fig:runtime}
\end{figure*}

\para{Register Allocation}
Given an input program, our objective is to determine the smallest number \( r \) of registers required for spill-free allocation. As previously mentioned, spill-free register allocation is a specific case of register allocation characterized by minimum cost, where the cost is zero if there is no spilling and infinite otherwise. We selected this problem for our experimental evaluation for two reasons: (i) most previous works in the literature focus on this variant, and (ii) there is generally no standard method for selecting the cost function \( c \); each compiler defines this function differently based on its own context and use cases, often relying on dynamic analysis and profiling. In contrast, spill-free allocation is well-defined and consistent across all compilers.

 Our approach successfully handled all input instances within the prescribed time and memory limits, either finding the optimal number of registers needed for spill-free allocation or reporting that more than 20 registers are required. Figure~\ref{fig:register-frequency} shows a histogram of the number of required registers. In contrast, the treewidth-based approach of~\cite{DBLP:conf/cc/Krause13} failed in 554 instances, including all instances requiring more than 8 registers.

Figure~\ref{fig:runtime} (a) shows a comparison of the runtimes of our algorithm vs the treewidth-based approach of~\cite{DBLP:conf/cc/Krause13}, when we set $r \leq 20.$ The average runtimes were 3.87 microseconds for our approach and  1,191,284  microseconds for~~\cite{DBLP:conf/cc/Krause13}. These averages are excluding the instances over which the previous methods failed. The runtimes were dominated by 704 instances for the treewidth-based approach, presumably due to high treewidth. Excluding these outlier instances, the average runtime was 372.34 microseconds for~\cite{DBLP:conf/cc/Krause13} .

\begin{figure}
	\subfloat[Pathwidth]{
		\includegraphics[clip,width=.5\textwidth]{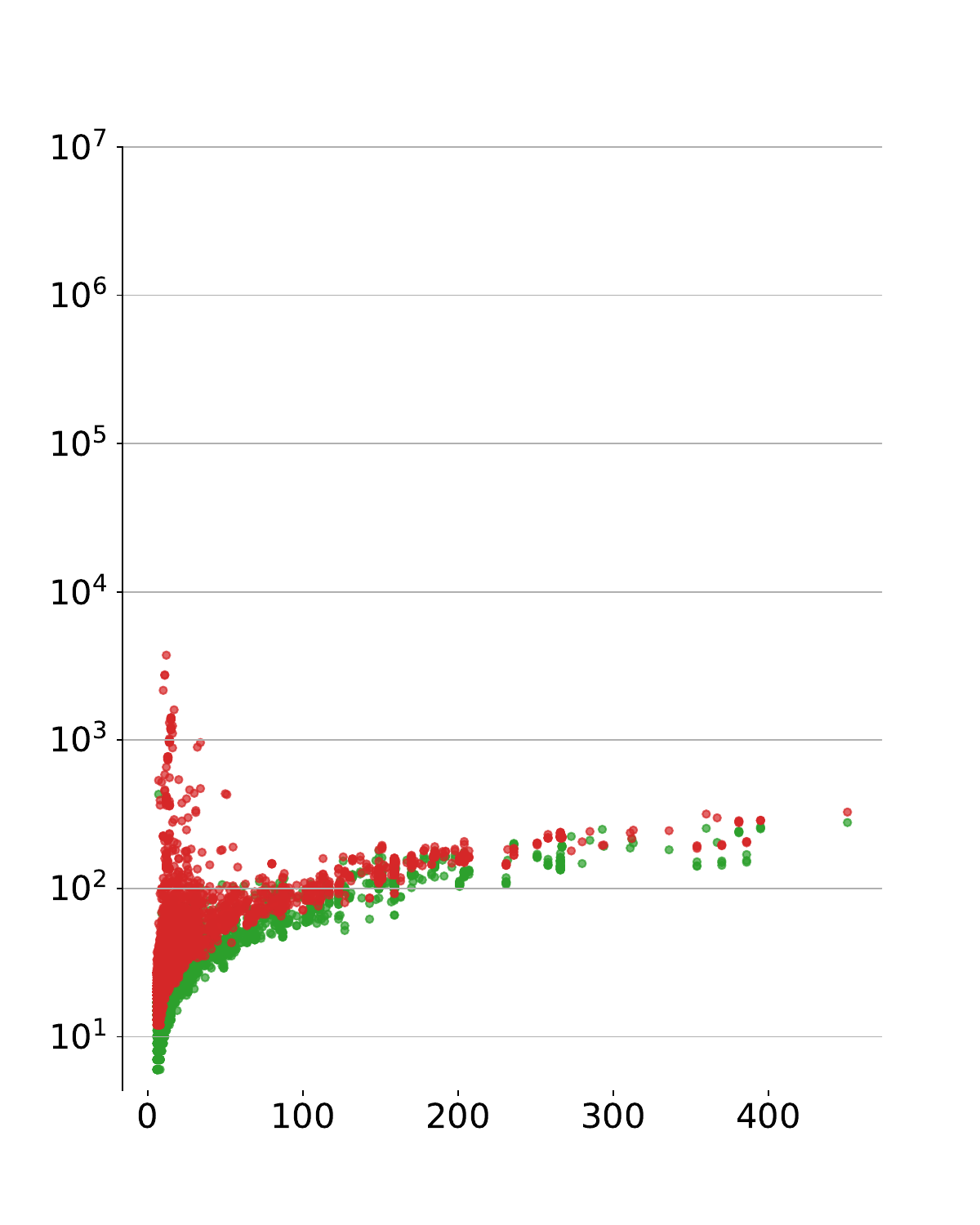} % Increased width to 0.45
	}%
	\subfloat[Graph Coloring]{
		\includegraphics[clip,width=.5\textwidth]{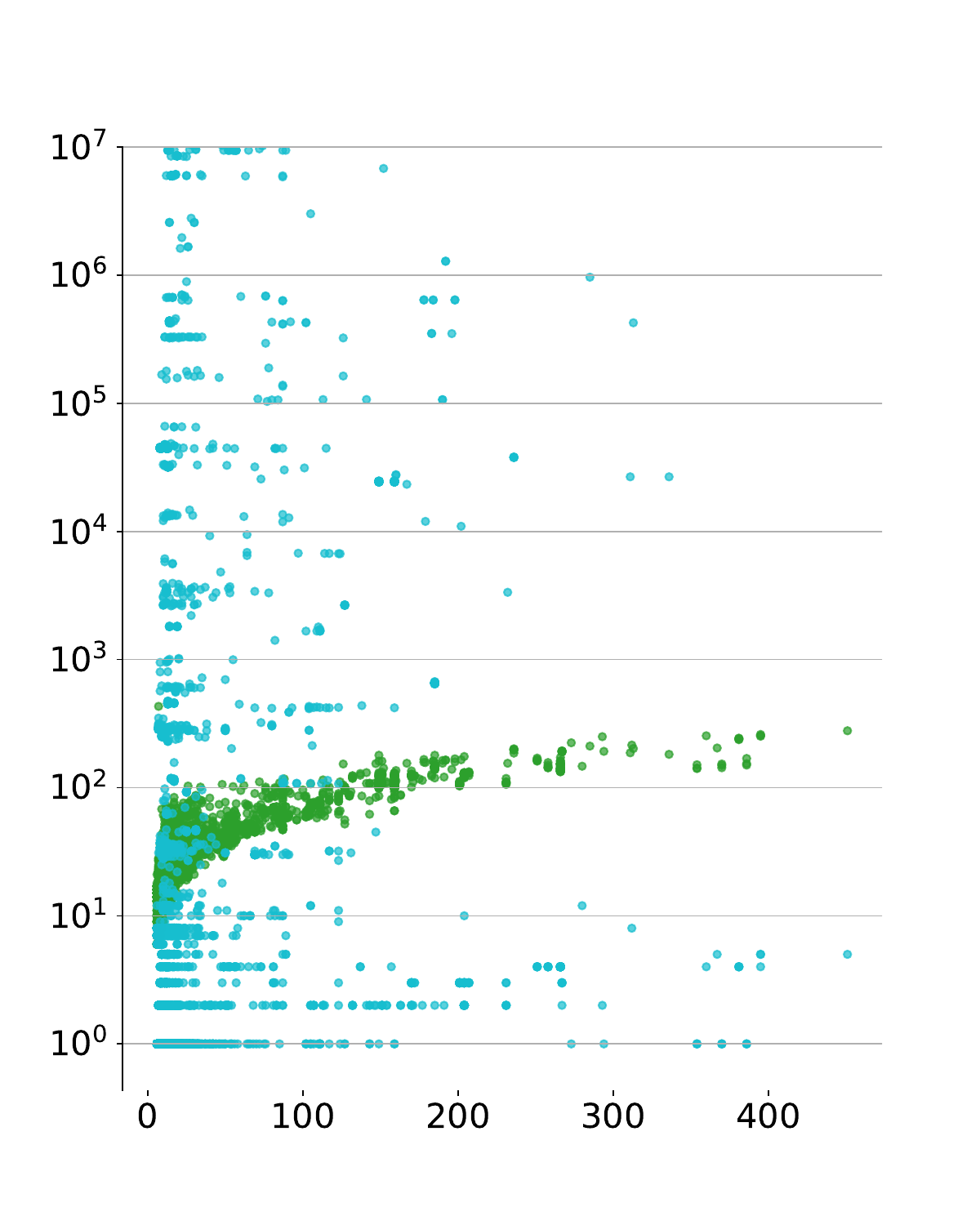} % Increased width to 0.45
	}%
	\caption{Runtime comparison of our approach with other algorithms for register allocation. (a) is based on \cite{DBLP:journals/pacmpl/ConradoGL23}, (b) is based on \cite{DBLP:conf/pldi/Chaitin82}. The x axis is the number of vertices in the CFG, and the y axis is the time in microseconds. The y axis is on a logarithmic scale.}
	\label{fig:runtimera}
\end{figure}

Figure~\ref{fig:runtimera} shows a comparison of runtime of our approach vs the pathwidth-based approach of~\cite{DBLP:journals/pacmpl/ConradoGL23} and the classical graph coloring approach of ~\cite{DBLP:conf/pldi/Chaitin82}. When we set $r \leq 20.$ The average runtimes were 21,544  microseconds for~~\cite{DBLP:journals/pacmpl/ConradoGL23} and for our approach, the average runtimes were 3.87 microseconds. These averages are excluding the instances over which the previous methods failed. 

We observe that Chaitin's graph coloring method, which is the only classical non-parameterized approach that uses the optimal number of registers, is highly unscalable. Given a time limit of 1 minute, it could handle only 6,042 benchmarks in our suite with an average runtime of 153,602 microseconds. Notably, this did not include any benchmark which required more than $8$ registers. Graph coloring timed out on all such benchmarks.

\begin{figure}
	\centering
		\includegraphics[width=\linewidth]{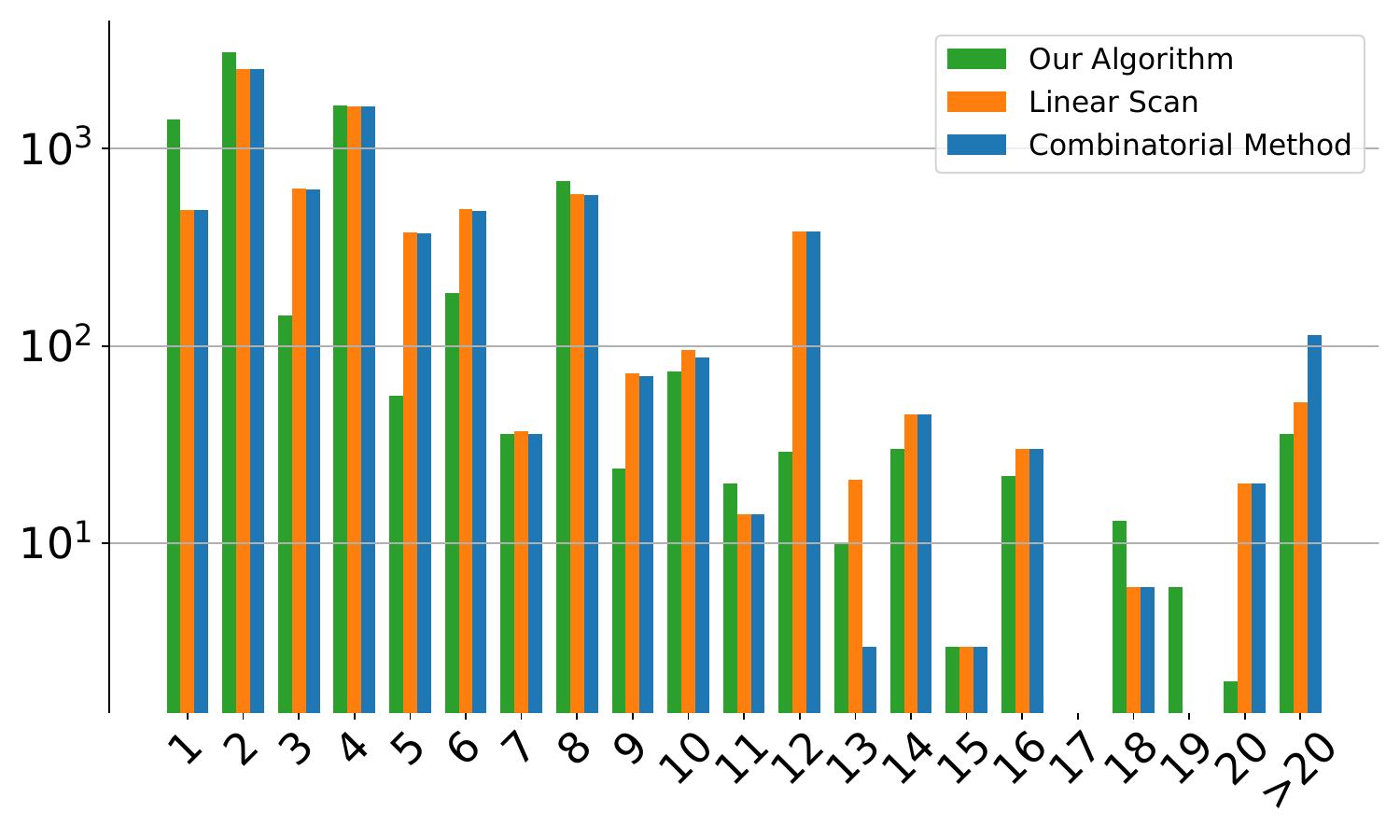}
	\caption{Histogram of the number of registers used by our algorithm, linear scan~\cite{poletto1999linear} and the combinatorial methods of~\cite{lozano2019combinatorial}. The $x$ axis is the number of registers used and the $y$ axis is the number of instances in logarithmic scale. Our algorithm always uses the minimum possible number of registers. This figure excludes trivial instances that needed no registers at all.}
	\label{fig:register-needed}
\end{figure}

Unlike graph coloring, the two heuristic methods, namely linear scan~\cite{poletto1999linear} and the combinatorial approach of~\cite{lozano2019combinatorial}, are quite efficient in practice. For the former, the average runtime was less than one microsecond, with a maximum of 9 microseconds over all benchmarks. For the latter, the average was 15 microseconds and the maximum runtime was 1,363 microseconds. Thus, their runtimes are comparable to our approach and linear scan is even faster. However, they are not guaranteed to use the minimum required number of registers as they are only heuristics and do not come with optimality proofs. Since linear scan and the combinatorial method are not exact, we compared their output to that of our approach in terms of the number of used registers. Figure~\ref{fig:register-needed} shows a histogram of the number of used registers by each approach. We observe that both heuristic methods often use more registers than our optimal solution. Indeed, linear scan finds the optimal number of registers 46 percent of the time whereas the combinatorial method of~\cite{lozano2019combinatorial} generates the optimal result in 45 percent of the instances.

\para{LOSPRE}
In our LOSPRE instances, the goal is to minimize the total number of computations in the resulting 3-address code. Thus, we use $K = \mathbb Z^2$ with lexicographic ordering. The cost assigned to each edge $(x, y)$ is $c(x, y) = (1, 0).$ We also enforce lifetime-optimality by assigning the cost $l(x) = (0, 1)$ to every vertex $x.$ As both approach can find the minimum cost, we only compare the time used.

Figures~\ref{fig:runtime} (b) provide runtime comparisons between \cite{DBLP:conf/scopes/Krause21} and our approach. On average, our algorithm takes 222.38 microseconds, while the treewidth-based approach of \cite{DBLP:conf/scopes/Krause21} has an average runtime of 1349.14 microseconds. The maximum runtime was 21,524 microseconds for our algorithm compared to 32,284 microseconds for \cite{DBLP:conf/scopes/Krause21}. Our algorithm significantly outperforms~\cite{DBLP:conf/scopes/Krause21} in the vast majority of benchmarks. We identified only 19 instances where our runtime exceeded 10,000 microseconds, whereas \cite{DBLP:conf/scopes/Krause21} takes more than 10,000 microseconds in 277 instances.

\para{Discussion} In all cases above, our algorithm outperforms the previous state-of-the-art. For register allocation, our approach is the first exact algorithm for spill-free register allocation that scales to realistic architectures with up to 20 registers, such as those in the \texttt{x86} family. Given the efficiency of our method, which achieves an average runtime of merely 4 microseconds per instance, we believe there is no longer a justification for using approximations or heuristics in spill-free register allocation. Despite its NP-hardness and theoretical hardness of approximation, our approach efficiently solves this problem for all practical instances. For LOSPRE, our approach is about 6 times faster than the treewidth-based algorithm.

\section{Broader Implications and Future Directions}

In this section, we expand on the broader conceptual implications of our work
and discuss how the SPL framework reveals new connections across compiler
optimization tasks. We further compare SPL decomposition with classical
treewidth-based techniques and outline several concrete directions for future
research. Our goal is to situate SPL decompositions within the larger ecosystem
of program analysis methods and highlight why this perspective is promising for
scaling to increasingly complex optimization pipelines.

\subsection{Connections Between Register Allocation and LOSPRE}

Although register allocation and LOSPRE pursue different objectives, both
problems exhibit a structural property that makes them particularly well-suited
for decomposition-based reasoning: \emph{locality of dependence}.
In both applications, decisions at a given point in the program depend on a
restricted set of variables or states that interact only through a small
boundary. This phenomenon is also reflected in their classical graph-theoretic
reductions: register allocation reduces to coloring portions of the interference
graph, whereas LOSPRE reduces to computing feasible placements of expression
evaluations under local data dependencies.

A central observation is that neither problem requires global knowledge of the
entire CFG at once. Instead, only a compact summary---the set of variables that
are live, active, or otherwise relevant at a boundary region---is needed to
propagate information across the decomposition. In the case of register
allocation, this boundary consists of live variables whose register assignments
must remain consistent across adjacent regions. For LOSPRE, the boundary
captures whether the temporary value representing the expression must be alive
before or after a region, and whether the region invalidates this value.

This locality behavior parallels that of classical width-based graph problems.
For example, solving a $k$-coloring problem on a CFG using an
SPL-style decomposition requires storing only the color assignments at special
boundary nodes. As long as the boundary size is small, the dynamic program
remains tractable, yielding an FPT algorithm with respect to the number of
colors. The same principle underlies many data-flow analyses such as available
expressions, reaching definitions, and liveness propagation. These analyses
typically maintain a finite abstract state at each boundary, making them strong
candidates for SPL-style reasoning.

The connection between register allocation and LOSPRE thus suggests that SPL
decompositions provide a unifying perspective for a wide class of optimization
problems. Tasks that share similar locality patterns may be amenable to
efficient algorithms that operate over the decomposition rather than the entire
CFG. Identifying such tasks and formalizing their boundary summaries is an
exciting avenue for expanding the reach of decomposition-based optimization.

\subsection{Comparison Between SPL Decomposition and Tree Decomposition}

Treewidth has long been a powerful parameter for analyzing CFGs, particularly
since structured programs have been shown to have treewidth at most
seven. However, tree decompositions operate on undirected graphs and treat all
edges symmetrically, which obscures the directional nature of control flow.
SPL decomposition, by contrast, is inherently directional and built from the
same grammatical constructs that generate structured programs. This alignment
with program syntax leads to several important advantages.

First, SPL decomposition yields consistently small cut sizes---never more than
four special vertices for structured programs. In contrast, tree decomposition
may require bags containing up to eight vertices even for well-structured
inputs. Since dynamic programming complexity is usually exponential in the cut
size, this reduction has a substantial impact on performance. Our experiments
support this intuition: smaller boundaries lead directly to smaller DP tables,
fewer compatibility checks, and faster per-node processing.

Second, the SPL construction faithfully preserves the control-flow semantics.
Series nodes reflect sequential composition; parallel nodes encode branching
and merging; loop nodes capture precisely the behavior of iterative constructs.
This semantic fidelity enables problem-specific optimizations, such as
propagating information in the direction of control flow, leveraging loop
structure, or simplifying local states based on dominance relationships.
Treewidth-based methods, lacking access to this structure, cannot exploit these
properties as directly.

Nevertheless, tree decomposition remains a robust and general tool. When
\texttt{goto} statements or other unstructured control-flow constructs are
introduced, the CFG may no longer be decomposable using the SPL grammar.
Treewidth-based methods then provide a fallback: even though the treewidth may
increase, empirical studies suggest that many real-world programs still exhibit
relatively small treewidth in practice. Understanding why unstructured
constructs rarely cause large blow-ups in treewidth, and whether such behavior
is fundamental or incidental, remains an intriguing open question.

\subsection{Future Research Directions}

Our findings open several promising directions for future exploration.

\paragraph{Extending SPL to Broader Optimization Tasks}
Many classical analyses---such as available expressions, reaching definitions,
constant propagation, and partial dead-code elimination---exhibit boundary
summaries similar to those for register allocation and LOSPRE. These
summaries often depend on a small number of variables or expression states,
suggesting that SPL decomposition could support efficient dynamic-programming
solutions. Formalizing these summaries and developing general frameworks for
SPL-based data-flow analyses could significantly expand the impact of this
approach.

\paragraph{Handling Unstructured Control Flow}
A natural next step is to extend SPL decomposition to accommodate
\texttt{goto} statements, exception handling, and other forms of unstructured
control flow. One approach is to augment the SPL grammar with special
constructs that capture irreducible control flow, or to treat such edges as
annotations handled separately by the dynamic program. Another possibility is
to combine SPL and tree decompositions: identify structured regions where SPL
applies and fall back to treewidth-based methods in irregular regions. Such a
hybrid decomposition may preserve the advantages of both worlds.

\paragraph{Applications Beyond Graph Coloring}
Several problems that do not resemble register allocation or LOSPRE 
may still benefit from small decomposition boundaries. For example, $\mu$-calculus model checking and related modal logics can be solved efficiently
on bounded-treewidth graphs. Whether SPL decomposition can offer similar
advantages---perhaps by using directionality or control-flow constructs to
improve symbolic propagation---is an interesting direction for research.
Investigating fixed-parameter and XP classifications for such problems under SPL decomposition may reveal new complexity-theoretic insights.

\paragraph{Integration into Compiler Pipelines}
Finally, applying SPL decomposition as a general optimization framework within existing compiler infrastructures (e.g., LLVM, GCC, or domain-specific compilers) may yield performance improvements across a range of analyses. Understanding how SPL interacts with SSA form, register coalescing, instruction scheduling, and other optimization passes is an important practical direction.

\medskip

Overall, our study points toward a broad landscape of opportunities for
advancing decomposition-based techniques in program analysis. The SPL
framework, by tightly coupling program structure with graph-theoretic
decomposition, offers a promising foundation for building scalable and principled
compiler optimizations.

\section{Conclusion}
This work introduces a novel grammar-based decomposition for control-flow graphs of structured programs, enhancing bottom-up dynamic programming while providing a computationally efficient alternative to traditional treewidth methods, significantly improving runtime in compiler optimization tasks. We applied this decomposition to improve the asymptotic runtimes of spill-free and minimum-cost register allocation, achieving exponential improvements related to the number of variables and registers. Additionally, we developed a lifetime-optimal speculative partial redundancy elimination (LOSPRE) algorithm for goto-free programs, which utilizes series-parallel-loop (SPL) decomposition to exploit graph sparsity, resulting in a remarkable 6 times runtime improvement over benchmarks from the Small Device C Compiler. Our method shows potential for significant runtime enhancements across various compiler optimization tasks, program verification, and model checking, although it is currently limited to goto-free programs. Future work will aim to extend the SPL decomposition to include goto statements and handle programs with exception handling or labeled loops.

\section*{Acknowledgments and Notes}

\noindent This work is an extension of the conference papers~\cite{ourasplos,oursetta}. The research was partially supported by the ERC Starting Grant 101222524 (SPES) and the Ethereum Foundation Research Grant FY24-1793. Chun Kit Lam was supported by a Hong Kong PhD Fellowship. Authors are ordered alphabetically.

%% else use the following coding to input the bibitems directly in the
%% TeX file.

%% Refer following link for more details about bibliography and citations.
%% https://en.wikibooks.org/wiki/LaTeX/Bibliography_Management

\bibliographystyle{elsarticle-num-names} 
\bibliography{sections/refs}

\end{document}